\newtheorem{defn}{Definition}
\newcommand{\SAT}{\textsf{SAT}}
\newcommand{\sfAND}{\textsf{AND}}
\newcommand{\sfANDs}{\textsf{AND}s }
\newcommand{\sfOR}{\textsf{OR}}
\newcommand{\CNF}{\textsf{CNF}}
\newcommand{\true}{\textsf{true}}
\newcommand{\false}{\textsf{false}}
\newcommand{\unknown}{\textsf{unknown}}
\newcommand{\sfboth}{\textsf{both}}
\newcommand{\NP}{\textsf{NP}}
\newcommand{\SharpP}{\textsf{\#P}}
\newcommand{\Tetris}{\textsf{Tetris}}
\newcommand{\SIMD}{\textsf{SIMD}}
\newcommand{\DNF}{\textsf{DNF}}
\newcommand{\CNFTetris}{\textsf{CNFTetris}}
\newcommand{\SharpSAT}{\textsf{\#SAT}}
\newcommand{\DPLL}{\textsf{DPLL}}
\newcommand{\SNAP}{\textsf{SNAP}}
\newcommand{\Insert}{\textsf{Insert}}
\newcommand{\InsertCluster}{\textsf{InsertCluster}}
\newcommand{\ContainsCluster}{\textsf{ContainsCluster}}
\newcommand{\Contains}{\textsf{Contains}}
\newcommand{\GetAllContainingBoxes}{\textsf{GetAllContainingBoxes}}
\newcommand{\GetAllContainingBoxesCluster}{\textsf{GetAllContainingBoxesCluster}}
\newcommand{\boxes}{\textsf{BOXES}}
\newcommand{\children}{\textsf{CHILDREN}}
\newcommand{\depth}{\textsf{DEPTH}}
\newcommand{\SharpSATs}{\textsf{sharpSAT}}
\newcommand{\Cachet}{\textsf{Cachet}}
\newcommand{\dSharp}{\textsf{dSharp}}
\newcommand{\MaxSAT}{\textsf{MaxSAT}}
\newtheorem{theorem}{Theorem}
\newtheorem{lemma}[theorem]{Lemma}
\newcounter{exampleCounter}
\newtheorem{example}[exampleCounter]{Example}
\newcounter{questionCounter}
\newtheorem{question}[questionCounter]{Question}
\title{Implementation of {\huge \Tetris} as a Model Counter\thanks{This research was supported in part by grant\# NSF CCF-1319402.}}
\author{
Jimmy Dobler \\
University at Buffalo, SUNY \\
jdobler@buffalo.edu 
\and
 Atri Rudra\\
University at Buffalo, SUNY \\
atri@buffalo.edu
}
\date{\vspace{-3ex}}
\begin{document}

\maketitle

\begin{abstract}
Solving \SharpSAT\ problems is an important area of work. In this paper, we discuss implementing \Tetris, an algorithm originally designed for handling natural joins, as an exact model counter for the \SharpSAT\ problem. \Tetris\ uses a simple geometric framework, yet manages to achieve the fractional hypertree-width bound. Its design allows it to handle complex problems involving extremely large numbers of clauses on which other state-of-the-art model counters do not perform well, yet still performs strongly on standard \SAT\ benchmarks. \par
    We have achieved the following objectives. First, we have found a natural set of model counting benchmarks on which \Tetris\ outperforms other model counters. Second, we have constructed a data structure capable of efficiently handling and caching all of the data \Tetris\ needs to work on over the course of the algorithm. Third, we have modified \Tetris\ in order to move from a theoretical, asymptotic-time-focused environment to one that performs well in practice. In particular, we have managed to produce results keeping us within a single order of magnitude as compared to other solvers on most benchmarks, and outperform those solvers by multiple orders of magnitude on others.  \par
    
\end{abstract}

\newpage

\section{Introduction}

\SharpSAT\ is the prototypical \SharpP-complete problem. \SharpSAT\ (as well as its \NP-complete cousin \SAT) are not only of great interest in computational complexity but by their completeness turn out to be a great tool to model a wide host of practical problems. This has led to an explosion of \SAT\ solvers that try to solve practical instances of \SharpSAT\ or \SAT\ by exploiting structure in these instances. For this paper, we will assume the importance of designing \SharpSAT\ and \SAT\ solvers as a given. We refer the reader to the book chapters by Gomes, Sabharwal and Selman on \SharpSAT\ solvers (also known as model counters)~\cite{GSS09} and by Gomes et al. on \SAT\ solvers~\cite{GKSS08} for more details.

A common technique is the  \DPLL\ procedure, a depth-first search procedure where the algorithm makes guesses on the assignments one variable at a time, determines at each stage whether or not this produces a conflict, and uses that information to learn new clauses and get closer to finding the satisfying assignment \cite{heras2008minimaxsat}.

Recently in the database literature, the work of Abo Khamis et al.~\cite{AboKhamis:2015} connected the \DPLL\ procedure to computing natural joins. In particular, they presented the \Tetris\ algorithm, which computes the natural join with beyond worst-case theoretical guarantees. As a special case, \Tetris\ also recovers some of the recent worst-case optimal join results~\cite{NPRR,V14,NRR13}. Abo Khamis et al. then showed that \Tetris\ is an \DPLL\ procedure and pointed out how one of the main step in their algorithm is exactly the {\em resolution} step that is ubiquitous in \DPLL -based \SAT\ solvers. Given the close ties of \SAT\ solvers to \DPLL, they left open the following intriguing possibility:
\begin{quote}
{\em Can \Tetris\ be implemented as a \SAT\ solver or model counter that can compete with state-of-the-art solvers?}
\end{quote}

\paragraph{Our contributions} Our main result in this paper is to show that \Tetris\ can indeed be implemented as a model counter that is competitive with state-of-the-art model counters on actual datasets.

While~\cite{AboKhamis:2015} presented a nice geometric framework to reason about algorithms to compute the natural join query, some of its simplicity arose from inefficiencies that matter when implementing \Tetris\ as a model counter. Before we present the issues we tackle, we give a quick overview of \Tetris. The fundamental idea is that, rather than working to create the output of a join directly, it instead attempts to rule out large sections of the cross product of the joined tables \cite{AboKhamis:2015}. Initially, \Tetris\ is given a set of sets whose union is the set of all incorrect solutions to the problem \Tetris\ is solving. In other words, any solution to the problem must not be a member of this union. By efficiently querying this set of sets, and by adding to it intelligently at various times (such as by adding a new exclusion whenever an output point is found), \Tetris\ is able to rule out increasingly large sets of potential solutions. Once it has ruled out all possible solutions, it terminates and outputs the list of solutions.

We tackle the following three issues with the theoretical presentation of \Tetris\ in~\cite{AboKhamis:2015}:
\begin{enumerate}
\item At any point, \Tetris\ needs to keep track of the union of all the potential solutions it has ruled out. To do this,~\cite{AboKhamis:2015} used a simple trie data structure to keep track of the union. However, this loses some poly-logarithmic factors and proves detrimental to practical performance. To deal with this, we design a new data structure that essentially compresses consecutive layers in the traditional trie into one `mega layer.' Inspired by the used of SIMD instructions by EmptyHeaded~\cite{emptyheaded} (to speed up implementations of worst-case optimal join algorithms~\cite{NRR13}), we set up the compression in a manner that lends itself to speedup via SIMD instructions.
\item The analysis of \Tetris\ in~\cite{AboKhamis:2015} was for data complexity. This implies they could afford to use exponential (in the size of the join query) time algorithm to find an appropriate ordering in which to explore different variables. In \SharpSAT\ instances, we can no longer assume that the number of variables is a constant, and hence we cannot obtain an optimal ordering using a brute force algorithm. We deal with this by designing heuristics that take the structure of \Tetris\ into account.
\item As mentioned earlier, \Tetris\ (like a \DPLL\ procedure) performs a sequence of resolutions and theoretically, it can store the outcomes of all the resolutions it performs. However, for practical efficiency, we use a heuristic to decide which resolution results to cache and which ones to discard. 
\end{enumerate}

Our experimental results are promising. On some natural \SharpSAT\ benchmarks based on counting number of occurrences of small subgraphs in a large graph (which we created), our implementation of \Tetris\ is at least two orders of magnitude (and in most cases more than three orders of magnitude) faster than the standard model counters (\SharpSATs~\cite{sharpsat}, \Cachet~\cite{sang2004combining} and \dSharp~\cite{muise2012dsharp}). We also compared \Tetris\ with these model counters on standard \SAT\ benchmarks, where \Tetris\ was either comparable or at most 25x slower.

\paragraph{Theoretical Implications} While this paper deals with an experimental validation of the theoretical result from~\cite{AboKhamis:2015}, we believe that it highlights certain theoretical questions that are worth investigating by the database community. We highlight some of our favorite ones that correspond to each of our three main contributions:
\begin{enumerate}
\item {\em Extending \Tetris\ beyond join queries.} As our work has shown, \Tetris\ can be used to solve problem beyond the original natural join computation. Recently, the worst-case optimal join algorithms were shown to be powerful enough to solve problems in host of other areas such as CSPs (of which \MaxSAT\ is a prominent example), probabilistic graphical models and logic~\cite{faq}. (Also see the followup work~\cite{ajar}.) The beyond worst-case results in~\cite{AboKhamis:2015} have so far seemed more of a theoretical novelty. However, given that this paper demonstrates the viability of \Tetris\ in practice, this work opens up the tantalizing possibility of extending the theoretical results of \Tetris\ to problems captured by~\cite{faq,ajar}. Such a result even for \MaxSAT\ would be of interest in practice.

\item {\em Computing orderings efficiently.} \label{implications:ordering} As mentioned earlier, the theoretical results for \Tetris\ assumes that the required ordering among variables can be computed in exponential time. However, for applications in \SAT\ (as well as other areas such as probabilistic graphical models, assuming the question in the item above can be answered), we need to compute orderings that are approximately good in polynomial time. Thus, a further avenue of theoretical investigation is to come up with a polynomial time algorithm to compute the ordering, and to prove some guarantees on the loss of performance from the case where \Tetris\ has access to the `optimal' ordering. Some of the heuristics developed in our paper might prove to be good starting points for this investigation. 
We would like to point that that the importance of efficiently computing variable orderings has been studied a lot in AI and database literature. Some of the very recent work on Generalized Hypertree Decompositions (which are well known to be equivalent to variable elimination orderings) could potentially be useful towards this goal~\cite{gottlob}.

\item {\em Time-space tradeoff.} \label{implications:time-space} Recent results on worst-case optimal algorithms to compute natural joins~\cite{NPRR,V14,NRR13} and to compute joins with functional dependencies~\cite{ANS16} all focus exclusively on time complexity. However, as highlighted by our work, being more prudent with space usage in fact benefits actual performance. This point was also indirectly highlighted in~\cite{AboKhamis:2015}, where it was shown that resolution schemes that did not cache their intermediate results are strictly less powerful than those that do (in the context of computing the natural join). However, we believe that a systematic theoretical study of the tradeoff between time and space needed to compute the natural join is an attractive route to pursue.
\end{enumerate}

   We will begin in Section \ref{sec:bg} by introducing the fundamental concepts necessary to understand both \SAT\ problems and details on \Tetris\ itself, all while giving a hands-on example of how \Tetris\ would handle a toy example. From there, we will move into Section \ref{sec:cw}, an in-depth analysis of our major contributions. Afterwards, we will continue with our experimental results in Section \ref{sec:res}. Then we will discuss related work in the field in Section \ref{sec:rw}. 

\section{Background}
\label{sec:bg}
\label{SEC:BG}
In this section, we will introduce the concepts necessary to understand how \Tetris\ functions, introduce the concept of resolutions, and walk through how \Tetris\ would handle a simple input. \par

\subsection{\SAT\ and Boxes}
We begin by defining several key terms and ideas. Recall that a \SAT\ problem consists of a series of Boolean variables, $x_1, x_2, ..., x_n$, joined together in a series of \sfAND\ and \sfOR\ clauses. Problems are generally presented in the conjunctive normal form (CNF), a simplification wherein the entire formula is written as a series of \sfANDs over a set of disjunctive clauses. One such example would be $(x_1 \vee x_2) \wedge (x_1 \vee \bar{x_2})$. A solution, or satisfying assignment, to a \SAT\ problem is an assignment of \true\ or \false\ to each of the variables such that the Boolean formula is satisfied; that is, that all clauses are satisfied. \par
    Next, we will consider the idea of boxes, which is how our algorithm will interpret \SAT\ problems. Each box is an $n$-dimensional structure in $\{0, 1\}^{n}$, where $n$ is the number of variables in the original \SAT\ problem. We will define this set as the {\em output space}; that is, all potential outputs will be elements of this set. Each of our boxes exists within this hypercube, and along each dimension has the value 0, the value 1, or extends along the full length of the edge. The reason for this is simple: 0 corresponds to \false, 1 to \true, and the length of the edge to both. Henceforth, we will use $\lambda$ to refer to edges with length 1. We thus form the following definition: \par

\begin{defn}[Box Notation]
A box takes the form $\langle b_1, b_2, ... , b_n\rangle$, where each $b_i \in \{ T, F, \lambda \} $.
\end{defn}

 Observe that, from these definitions, we can consider every assignment to be a 0-dimensional box; this will be important later. \par

    Then, our goal will be to find the set of points within the output space that are not contained (see Definition \ref{def:contains}) by any boxes. Any such point will be termed an {\em output point}, and the goal of an algorithm working on these boxes is to find all such output points. 

\begin{defn}[Containment]
\label{def:contains}
A box $b$ is said to contain another box $c$ if, for all points $p \in \{0, 1\}^n$ such that $p \in c$, it is true that $p \in b$. Equivalently, the box $\langle b_1, b_2, ... , b_n \rangle$ contains the box $\langle c_1, c_2, ... , c_n \rangle$ if, for all $i$, $b_i = c_i$ or $b_i = \lambda$.
\end{defn}

    However, there is one key difference between these two representations. Each clause in a \CNF\ formula is essentially a subproblem wherein at least one variable's assignment must match its value in the clause for the assignment to possibly be satisfying. But with boxes, the exact opposite is true: if an assignment matches the value for the boxes on all non-$\lambda$ dimensions, we reject the assignment. In other words, if we consider a geometric visualization of these boxes, any and all assignments that fall within a box are rejected. \par
    Hence, our next step is to devise a means by which to convert any given \SAT\ problem in \CNF\ form to the boxes format that \Tetris\ can understand. As follows from our above observation, the most important step is simply the negation of the \CNF\ formula; the rest is all bookkeeping. For the exact algorithm, see Algorithm \ref{algo:convert}. \par

\begin{algorithm}[H]
\caption{Conversion from CNF to boxes}
\label{algo:convert}
\begin{algorithmic}[1]
\FOR{each CNF clause}
\STATE {Negate the clause}
\STATE {Set all $\bar{x_i}$ to F, and all $x_j$ to T}
\STATE {Set all variables not present in the clause to $\lambda$.}
\STATE {Insert into the database} \COMMENT{\texttt{See Definition \ref{database:def}}}
\ENDFOR
\end{algorithmic}
\end{algorithm}

Let us consider the following toy example CNF problem:
\begin{example}
$(x_1 \vee x_2) \wedge (x_1 \vee \bar{x_2}) \wedge (x_2 \vee x_3)$. 
\label{exprob1}
\end{example}
Our first step is to negate each clause, which will give us a disjunctive normal form (DNF) co-problem: $(\bar{x_1} \wedge \bar{x_2}) \vee (\bar{x_1} \wedge x_2) \vee (\bar{x_2} \wedge \bar{x_3})$.

Next, we will convert to boxes (by replacing a variable with $T$ and its negation with $F$) and add all missing variables (as $\lambda$). After conversion, our three clauses become $\langle F,F,\lambda \rangle$, $\langle F,T, \lambda \rangle$, and $\langle \lambda, F,F\rangle$. \par

\begin{figure}[h]
\begin{center}
\begin{tikzpicture}

\filldraw[color=black, fill=red] (-3,0) -- (-3.707, -.707) -- (-3.507, -.707) -- (-2.8, 0) -- cycle;
\draw [ultra thick] (-3, 0) -- (-3,1.0) node (yaxis) [above] {$x_2$};
\draw[ultra thick] (-3, 0) -- (-2.0,0) node (xaxis) [right] {$x_1$};
\draw[ultra thick] (-3, 0) -- (-3.707, -.707) node (zaxis) [below] {$x_3$};
\node[] at (-2.0, -.3) {1};
\node[] at (-3.3, 1.0) {1};
\node[] at (-3.907, -.607) {1};
\node[] at (-3.2, .1) {0};

\node[] at (-2.8, -2) {$(x_1 \vee x_2), \langle F, F, \lambda \rangle$};

\filldraw[color=black, fill=red] (0,1) -- (-.707, .293) -- (-.507, .293) -- (.2, 1) -- cycle;
\draw [ultra thick] (0, 0) -- (0,1.0) node (yaxis2) [above] {$x_2$};
\draw[ultra thick] (0, 0) -- (1.0,0) node (xaxis2) [right] {$x_1$};
\draw[ultra thick] (0, 0) -- (-.707, -.707) node (zaxis2) [below] {$x_3$};
\node[] at (1.0, -.3) {1};
\node[] at (-.3, 1.0) {1};
\node[] at (-.907, -.607) {1};
\node[] at (-.2, .1) {0};

\node[] at (.2, -2) {$(x_1 \vee \bar{x_2}), \langle F, T, \lambda \rangle$};

\filldraw[color=black, fill=red] (3, .1) -- (4, .1) -- (4, 0) -- (3,0) -- cycle;
\draw [ultra thick] (3, 0) -- (3,1) node (yaxis3) [above] {$x_2$};
\draw[ultra thick] (3, 0) -- (4,0) node (xaxis3) [right] {$x_1$};
\draw[ultra thick] (3, 0) -- (2.293, -.707) node (zaxis3) [below] {$x_3$};
\node[] at (4.0, -.3) {1};
\node[] at (2.7, 1.0) {1};
\node[] at (2.093, -.607) {1};
\node[] at (2.8, .1) {0};

\node[] at (3.2, -2) {$(x_2 \vee x_3), \langle \lambda, F, F \rangle$};

\end{tikzpicture}
\caption{Our starting boxes, and the corresponding \SAT\ clauses. While boxes are, technically speaking, strictly the corners of what we depict as the boxes, we depict them with the edges and surfaces drawn for the purpose of visual clarity.}
\end{center}
\end{figure}
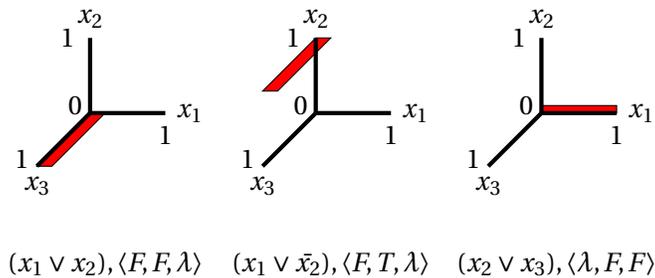

At this point, it is time to insert the boxes into our data structure. Let us list the fundamental operations the data structure must be able to perform:
\begin{defn}[\Tetris\ data structure]
\label{database:def}
The \Tetris\ data structure shall be able to perform the following operations:\\
1) \Insert, input is the box to be inserted, no output\\
2) \Contains, input is the box we are seeing if the structure contains, output is the containing box (see Definition \ref{def:contains})\\
3) \GetAllContainingBoxes, input is the box we are seeing if the structure contains, output is the set of all containing boxes\\
\end{defn}

We will return to the details of the data structure implementation in Section~\ref{sec:ds}.

\subsection{Resolution}
\label{SEC:RESOLUTION}
We then come to the concept of resolution, a key aspect of \Tetris\ and most state-of-the-art \SAT\ solvers. Resolution can be defined over both \CNF\ clauses and boxes; let us begin with the former. Let us consider two clauses in our \CNF\ Example \ref{exprob1} once again: specifically, $(x_1 \vee x_2)$ and $(x_1 \vee \bar{x_2})$. We see that these are two very similar clauses; they differ only in that the $x_2$ term is negated in one and not the other. Therefore, we can {\em resolve} these two clauses by removing the $x_2$ term and then taking the \sfOR\ of all remaining variables. In this case, this gives us $(x_1)$. We then remove the original two clauses from the \CNF\ problem and insert this new clause in its place. This is a significant simplification. \par
Similarly, we can resolve any two clauses such that there is exactly one pivot point, by which we mean a variable that appears in both clauses, but is negated in one and not the other. For instance, looking back at our example, we can also resolve $(x_1 \vee \bar{x_2})$ with $(x_2 \vee x_3)$ to form $(x_1 \vee x_3)$ In this case, we would not be able to remove the original two clauses, but we would have gained information. 
 Let us now formally define this process:
\begin{defn}[Resolution on Clauses]
Two clauses $(x_{i_1} \vee x_{i_2} \vee... \vee x_{i_m} \vee v)$ and $(y_{j_1} \vee y_{j_2} \vee ... \vee y_{j_l} \vee \bar{v}) $, $i \in I$, $j \in J$, $I, J \subset [n]$, can be resolved if and only if there exists exactly one variable $v$, the pivot point, such that $v \in x$ and $\bar{v} \in y$.\\
The resolution of the two clauses is $(x_{i_1} \vee x_{i_2} \vee ... \vee x_{i_m} \vee y_{j_1} \vee y_{j_2} \vee ... \vee y_{j_m})$.
\end{defn}
\par
Since boxes are simply another representation of the same problem, it follows that resolution can be performed on boxes as well.  First, we will require that there must exist exactly one variable on which one box is true and the other box is false.\footnote{This is exactly analogous to the requirement that we resolve on a pivot point in the clause version.} Call this the {\em pivot variable}. In the output, set this variable to $\lambda$. Then, for each other variable, if it is $T$  in one or both boxes, set it to $T$ in the output box; if it is $F$ in one or both boxes, set it to $F$ in the output box; and if both variables are $\lambda$, then the resolution of the two is also $\lambda$. \par
    We see two possible resolutions in our Example \ref{exprob1}. The resolution of $\langle F,F,\lambda\rangle$ and $\langle F,T,\lambda\rangle$, two co-planar and parallel edges, is the square $\langle F,\lambda, \lambda\rangle$, as depicted in Figure \ref{resolution1}, and the resolution of the askew edges $\langle F,T,\lambda \rangle$ and $\langle\lambda ,F,F\rangle$ is the edge $\langle F,\lambda ,F\rangle$, as depicted in Figure \ref{resolution3}. For a formal definition of the resolution operator, henceforth $\oplus$, see Definition \ref{def:boxres}. \par

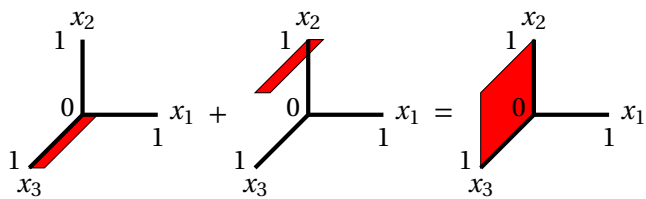
\begin{figure}[H]
\begin{center}

\begin{tikzpicture}

\filldraw[color=black, fill=red] (-3,0) -- (-3.707, -.707) -- (-3.507, -.707) -- (-2.8, 0) -- cycle;
\draw [ultra thick] (-3, 0) -- (-3,1.0) node (yaxis) [above] {$x_2$};
\draw[ultra thick] (-3, 0) -- (-2.0,0) node (xaxis) [right] {$x_1$};
\draw[ultra thick] (-3, 0) -- (-3.707, -.707) node (zaxis) [below] {$x_3$};
\node[] at (-2.0, -.3) {1};
\node[] at (-3.3, 1.0) {1};
\node[] at (-3.907, -.607) {1};
\node[] at (-3.2, .1) {0};

\node[] at (-1.2, 0) {\large \em $+$};

\filldraw[color=black, fill=red] (0,1) -- (-.707, .293) -- (-.507, .293) -- (.2, 1) -- cycle;
\draw [ultra thick] (0, 0) -- (0,1.0) node (yaxis2) [above] {$x_2$};
\draw[ultra thick] (0, 0) -- (1.0,0) node (xaxis2) [right] {$x_1$};
\draw[ultra thick] (0, 0) -- (-.707, -.707) node (zaxis2) [below] {$x_3$};
\node[] at (1.0, -.3) {1};
\node[] at (-.3, 1.0) {1};
\node[] at (-.907, -.607) {1};
\node[] at (-.2, .1) {0};

\node[] at (1.8, 0) {\large \em $=$};

\filldraw[color=black, fill=red] (3, 0) -- (3,1) -- (2.293, .293) -- (2.293, -.707) -- cycle;
\draw [ultra thick] (3, 0) -- (3,1) node (yaxis3) [above] {$x_2$};
\draw[ultra thick] (3, 0) -- (4,0) node (xaxis3) [right] {$x_1$};
\draw[ultra thick] (3, 0) -- (2.293, -.707) node (zaxis3) [below] {$x_3$};
\node[] at (4.0, -.3) {1};
\node[] at (2.7, 1.0) {1};
\node[] at (2.093, -.607) {1};
\node[] at (2.8, .1) {0};

\end{tikzpicture}
\caption{The resolution of $\langle F,F,\lambda \rangle$ and $\langle F,T,\lambda\rangle$ on the vertex $x_2$ is the square $\langle F,\lambda , \lambda \rangle$. This is equivalent to $ (x_1 \vee x_2)$ resolved with $(x_1 \vee \bar{x_2})$  being the clause $(x_1)$.}
\label{resolution1}
\end{center}
\end{figure}

\begin{figure}[H]
\begin{center}
\begin{tikzpicture}

\filldraw[color=black, fill=red] (-4,1) -- (-4.707, .293) -- (-4.507, .293) -- (-3.8, 1) -- cycle;
\draw [ultra thick] (-4, 0) -- (-4,1.0) node (yaxis) [above] {$x_2$};
\draw[ultra thick] (-4, 0) -- (-3.0,0) node (xaxis) [right] {$x_1$};
\draw[ultra thick] (-4, 0) -- (-4.707, -.707) node (zaxis) [below] {$x_3$};
\node[] at (-3.0, -.3) {1};
\node[] at (-4.3, 1.0) {1};
\node[] at (-4.907, -.607) {1};
\node[] at (-4.2, .1) {0};

\node[] at (-1.7, 0) {\large \em $+$};

\filldraw[color=black, fill=red] (0, .1) -- (1, .1) -- (1, 0) -- (0,0) -- cycle;
\draw [ultra thick] (0, 0) -- (0,1.0) node (yaxis2) [above] {$x_2$};
\draw[ultra thick] (0, 0) -- (1.0,0) node (xaxis2) [right] {$x_1$};
\draw[ultra thick] (0, 0) -- (-.707, -.707) node (zaxis2) [below] {$x_3$};
\node[] at (1.0, -.3) {1};
\node[] at (-.3, 1.0) {1};
\node[] at (-.907, -.607) {1};
\node[] at (-.2, .1) {0};

\node[] at (2.3, 0) {\large \em $=$};

\filldraw[color=black, fill=red] (4.01, 0) -- (4.01,1) -- (4.1, 1) -- (4.1, 0) -- cycle;
\draw [ultra thick] (4, 0) -- (4,1) node (yaxis3) [above] {$x_2$};
\draw[ultra thick] (4, 0) -- (5,0) node (xaxis3) [right] {$x_1$};
\draw[ultra thick] (4, 0) -- (3.293, -.707) node (zaxis3) [below] {$x_3$};
\node[] at (5.0, -.3) {1};
\node[] at (3.7, 1.0) {1};
\node[] at (3.093, -.607) {1};
\node[] at (3.8, .1) {0};

\end{tikzpicture}
\caption{The resolution of $\langle F,T,\lambda \rangle$ and $\langle\lambda ,F,F\rangle$ on the vertex $x_2$ is the edge $\langle F,\lambda ,F\rangle$ This is equivalent to $(x_1 \vee \bar{x_2})$ resolved with $(x_2 \vee x_3)$ being the clause $(x_1 \vee x_3)$.}
\label{resolution3}

\end{center}
\end{figure}
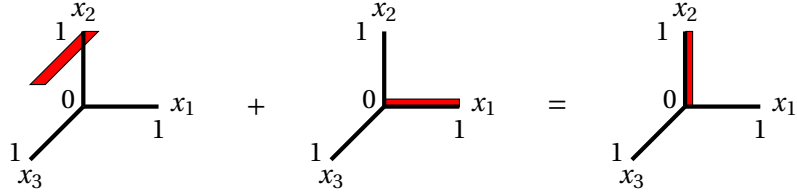

\begin{defn}[Resolution on Boxes]
\label{def:boxres}
Two boxes $\langle b_1, b_2, ..., b_n \rangle$ and $\langle c_1, c_2, ..., c_n \rangle$ can be resolved  if and only if there exists exactly one $i$ such that $b_i$ is \true\ and $c_i$ is \false, or vice-versa. \\
In the resolved box $a$, $a_i = \lambda$. Each $a_j, j \ne i$ is equal to $b_j \oplus c_j$, where $\oplus$ is defined as follows:\\
$T \oplus T = T$\\
$F \oplus F = F$\\
$\lambda \oplus T = T$\\
$\lambda \oplus F = F$\\
$T \oplus \lambda = T$\\
$F \oplus \lambda = F$\\
$\lambda \oplus \lambda = \lambda$\\
$T \oplus F$ is undefined.\\
\end{defn}

   Observe that resolution on boxes and resolution on clauses are identical:

\begin{lemma}
\label{lemma}
\label{app:lemma}
Resolution on boxes, with the additional restriction that exactly one variable must be true  in one box and false in the other, is exactly equivalent to resolution on SAT clauses.
\end{lemma}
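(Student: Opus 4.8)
The plan is to exhibit the explicit correspondence $\phi$ between \CNF\ clauses and boxes induced by Algorithm \ref{algo:convert}, and then to verify that $\phi$ \emph{intertwines} the two notions of resolution: resolving two clauses and then converting via $\phi$ must yield the same box as converting each clause via $\phi$ first and then applying the box operator $\oplus$ of Definition \ref{def:boxres}. Concretely, $\phi$ sends a clause $C$ over variables $x_1,\dots,x_n$ to the box $\langle b_1,\dots,b_n\rangle$ where $b_i = F$ if $x_i$ occurs positively in $C$, $b_i = T$ if $x_i$ occurs negatively in $C$, and $b_i = \lambda$ if $x_i$ does not occur in $C$. The $T$/$F$ swap here is exactly the negation step of Algorithm \ref{algo:convert}, reflecting that a box records the region the clause \emph{forbids}. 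Since a clause is determined by its set of literals and no variable may appear both positively and negatively in one clause, $\phi$ is a bijection onto the boxes.

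First I would check that the two resolvability conditions coincide. Clause resolution requires a unique pivot variable $v$ occurring positively in one clause and negatively in the other. Under $\phi$, ``$v$ positive in $C$'' means coordinate $v$ of $\phi(C)$ is $F$, and ``$v$ negative in $D$'' means coordinate $v$ of $\phi(D)$ is $T$; so a pivot variable is precisely a coordinate at which one box is \false\ and the other is \true. This is exactly the resolvability condition of Definition \ref{def:boxres} together with the stipulated restriction that exactly one such coordinate exist.

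Next I would match outputs coordinate by coordinate. At the pivot coordinate $i$, clause resolution deletes $v$ while box resolution sets $a_i=\lambda$, and these agree under $\phi$. For every other variable $x_j$ I would run through the ways it can occur across the two clauses and confirm they track the $\oplus$ table: positive occurrence in at least one clause (with the other absent or also positive) yields $x_j$ in the resolvent, matching $F\oplus F=F$, $F\oplus\lambda=F$, $\lambda\oplus F=F$; the symmetric negative cases match the $T$ rows; and absence in both clauses gives $\lambda\oplus\lambda=\lambda$, i.e. $x_j$ absent from the resolvent.

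The one case needing care, and the only place the argument could fail, is a non-pivot variable $x_j$ occurring positively in one clause and negatively in the other. On the clause side this would place both $x_j$ and $\bar{x_j}$ in the resolvent; on the box side it is exactly the entry $T\oplus F$ that Definition \ref{def:boxres} leaves undefined. The resolution is that the ``exactly one pivot'' hypothesis forbids it outright: any such $x_j$ would itself be a second pivot variable, contradicting uniqueness. Hence under the stated restriction the undefined entry is never reached, the two outputs agree on every coordinate, and $\phi$ carries clause resolution to box resolution. The whole argument is a bookkeeping verification; the only genuine subtlety is keeping the negation inside $\phi$ straight and noting that ruling out a second pivot is precisely what rules out the undefined (tautologous) case.
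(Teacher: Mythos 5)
Your proposal is correct and takes essentially the same approach as the paper: exhibit the clause-to-box translation, then verify coordinate by coordinate that converting the clause resolvent agrees with the box operator $\oplus$ of Definition \ref{def:boxres}. If anything you are slightly more careful than the paper's proof, which asserts $t_k = x_k = y_k$ for shared non-pivot variables without explicitly observing---as you do---that the exactly-one-pivot hypothesis is what excludes an opposite-sign clash and hence the undefined $T \oplus F$ entry.
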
 
\begin{proof}
Let $(x_{i_1} \vee ... \vee x_{i_m})$ and $(y_{j_1} \vee ... \vee y_{j_l}), i \in I, j \in J, where I and J \subset [n],$ be the clauses we are resolving. Assume WLOG that $i_1 = j_1$ is the pivot point. Then the resolved clause is $(x_{i_2} \vee ... \vee x_{i_m} \vee y_{j_2} \vee ... \vee y_{j_l})$. \\
The boxes equivalent to our starting to clauses are $\langle b_1, ... , b_n \rangle $ and $\langle c_1, ..., c_n \rangle $, where $b_k = F $ if $x_k \in x, b_k = T$ if $\bar{x_k} \in x,$ and $\lambda$ otherwise, and $c_k$ is defined similarly with respect to $y$. The resolution of these boxes $a$ is then defined as $\langle a_1, ..., a_n \rangle$, where $a_k = \lambda$ if $k = i_1 = j_1$, and $a_k = b_k \oplus c_k$ otherwise. \\
Now, let us calculate the box equivalent of the output of the clause-based resolution, $t$. It can be shown that $t_k = \lambda$ if $k = i_1 = j_1$, $t_k = x_k = y_k$ if $k \in I$, $k \in J$ and $k \ne i_1$, $t_k = x_k$ if $k \in I$ and $k \notin J$, $t_k = y_k$ if $k \in J$ and $k \notin I$, and $t_k = \lambda$ if $k \notin I $ and $k \notin J$. Inspection with the above definition of $\oplus$ reveals that $t$ is exactly equivalent to $a$. Since the same problem with arbitrary, equivalent inputs produced equivalent outputs, the two operations must be equivalent.
\end{proof}

\Tetris\ introduces one additional restriction on resolution. 
\begin{defn}[Resolution on Boxes in \Tetris]
Two boxes $b$ and $c$ can be resolved if and only if there is exactly one spot $i$ such that $b_i$ = T and $c_i$ = F, or vice-versa, and for all $j > i$, $b_j = c_j = \lambda$.
\end{defn}
In other words, we will demand that that the pivot variable be the final non-$\lambda$ variable. Therefore, while \Tetris\ will perform the resolution of $\langle F,F,\lambda\rangle$ and $\langle F,T,\lambda \rangle$ (Figure \ref{resolution1}), it will not perform the resolution of $\langle F,T,\lambda \rangle$ and $\langle\lambda ,F,F\rangle$ (Figure \ref{resolution3}). We see, then, that the ordering of the variables determines whether or not a resolution is even possible. This makes determining the global ordering of the variables a key issue, as mentioned earlier as Theoretical Implication \ref{implications:ordering}, which we will address later in Section \ref{sec:gvo}. \par
    In general, \Tetris\ performs resolution on pairs of recently found boxes. Let $k$ be the location of the last non-$\lambda$ variable in a box $b$. Then $b_k$ must be either \true\ or \false. If it is \false, we will store the box for future use. If it is \true, then we will take this box $b$ and resolve it with the stored box with the same value for $k$ whose last non-$\lambda$ variable was \false. By doing so, we will guarantee the production of a box where the last $n-k+1$ variables have the value $\lambda$. For more details on how this works, along with the reasoning for why such pairs can always be found, see Section \ref{Tetris}.

\subsection{\Tetris}
\label{Tetris}
For now, let us return to our Example \ref{exprob1}. When we last left off, we were just inserting the three clauses into our data structure, which was loosely defined in Definition \ref{database:def}. For a formal definition of the database and details for how it allows the set of boxes \Tetris\ knows about to be quickly and efficiently queried, see Section \ref{sec:ds}; for now, one can simply assume it to be a trie-based structure. Additionally, we can resolve the first two boxes while leaving the third untouched, as in Figure \ref{resolution1}; therefore, the database will contain exactly the boxes $\langle F,\lambda ,\lambda \rangle$ and $\langle \lambda ,F,F\rangle$ (see Figure \ref{dbstate1}).  Furthermore, we will prepare an empty array of boxes $L$ of size $n$, which will be used later. The purpose of this array is to store and retrieve boxes that we wish to resolve with other boxes. \par
    Now that we have our database established, it is time to perform \Tetris\ proper. The basic idea here is very simple. We will pick a point $P$ in the output space, which we will call the {\em probe point}; recall that this point is itself a 0-dimensional box. We then determine whether or not any box in the database contains this point. If one does, we will store this box in an additional data structure referred to as the {\em cache}, which functions identically to the main database, and probe a new point. If no box contains $P$, we will list the point as a solution and furthermore add this point into the cache. Along the way, we will perform resolution in order to create new and larger boxes. This process continues until the entirety of the output space is covered by a single box, at which point we must have found every output point and are done. Algorithm \ref{algo:genTetris} has the details. It should be noted that this algorithm was originally presented recursively in \cite{AboKhamis:2015}; here, we present it iteratively both for the purposes of speed and because this allows for non-chronological backtracking; in other words, we can backtrack more than one layer at a time.\par

\begin{algorithm}[h]
\caption{Advance(box $b$, probe point $\&p$)  \texttt{     (note: $p$ is a global variable)}}
\label{algo:adv}
\begin{algorithmic}[1]
\WHILE{$b$ \Contains\ $p$}
\IF{the last non-$\lambda$ variable of $p$ is $F$}
\STATE{Set that variable to $T$} \COMMENT{\texttt{Return to the previous branching point and take the right, or \true, branch}}
\ELSE 
\WHILE{the last non-$\lambda$ variable of $p$ is $T$}
\STATE{Set that variable to $\lambda$} \COMMENT{\texttt{Return to the most recent level where we branched left}}
\ENDWHILE
\STATE{Set the last non-$\lambda$ variable of $p$ to $T$} \COMMENT{\texttt{Branch right here}}
\STATE{Replace all $\lambda$s after this variable with $F$} \COMMENT{\texttt{Repeatedly branch left}}
\ENDIF
\ENDWHILE
\end{algorithmic}
\end{algorithm}

\begin{algorithm}[h]
\caption{General \Tetris\ for \SAT}
\label{algo:genTetris}
\begin{algorithmic}[1]
\STATE {Establish variable ordering}
\STATE {Build the database $D$ using Algorithm \ref{algo:convert}}
\STATE {$C \gets \emptyset$}
\STATE {$L \gets $  An empty array of size $n$} \COMMENT{\texttt{This array is implicit in \cite{AboKhamis:2015}}}
\STATE {$p \gets \langle F, F, ..., F\rangle$ }
\WHILE{$\langle \lambda, \lambda, ..., \lambda \rangle \notin C$}
  \IF {($b \gets  C$.\Contains ($p$)) is nonempty}
     \STATE {Advance($b,p)$} \COMMENT{\texttt{Advance (see Algorithm \ref{algo:adv}) the probe point past $b$}}
  \ELSIF {($A \gets  D$.\GetAllContainingBoxes ($p$)) is nonempty}
     \FORALL {boxes $b \in A$}
     \STATE {$C.\Insert(b)$} 
     \STATE {Advance($b,p)$} \COMMENT{\texttt{Advance the probe point past $b$}}
     \ENDFOR
  \ELSE 
     \STATE {Add $p$ to the output}\COMMENT{\texttt{There is no containing box, so $p$ is an output point}}
     \STATE {$C. \Insert (p)$}
     \STATE {Advance($p,p)$} \COMMENT{\texttt{Advance the probe point past itself}}
  \ENDIF
  \STATE {$k \gets$ the location of the last non-$\lambda$ variable in $b$ w.r.t. the variable ordering}
  \IF {$b_k = F$}
      \STATE {$L[k] = b$} \COMMENT{\texttt{Store the most recent left-branching box for a given depth}}
  \ELSE     
      \STATE {$r \gets b \oplus L[k]$}                  \COMMENT{\texttt{Resolve this right-branching box with the corresponding left-branching box}}
      \STATE {$C. \Insert (r)$}
  \ENDIF
\ENDWHILE
\end{algorithmic}
\end{algorithm}

Now, let us consider how this algorithm behaves with regards to our earlier Example \ref{exprob1}. We pick as our first probe point $\langle F,F,F\rangle$, giving us the situation illustrated in Figure \ref{dbstate1}. We first scan our local cache, $C$, for any boxes that contain this point; however, since this is the first probe point, the cache is trivially empty. Next, we scan the database $D$. $D$ contains all the boxes corresponding to the clauses in the original \SAT\ problem. The database just so happens to contain two containing boxes; for reasons that will become clear shortly, the operation will choose to output $\langle F,\lambda, \lambda\rangle$.  We insert the box $\langle F, \lambda, \lambda \rangle$ into $C$. \par
Our next task is to advance the probe point until it lies beyond our box. To do this, we proceed according to Algorithm \ref{algo:adv}. The idea is to think of the set of all possible probe points as a tree that we are performing a depth-first search on, with $F$ representing left-branching paths and $T$ representing right-branching paths. We will continue along this depth-first search until we find a point not covered by the most recently discovered box. This takes us to $\langle T, F, F\rangle$. Note that if the database had fetched $\langle F, F, \lambda \rangle$, we would not have been able to advance the probe point as far. \par
   Finally, we insert our containing box into the array $L$ at location 1, since only the first variable is non-$\lambda$, for future use.  We know to do insertion here, rather than trying to resolve with a non-existent box, because the value of that first non-$\lambda$ variable is $\false$. This takes us to the situation depicted in Figure \ref{dbstate2}. \par

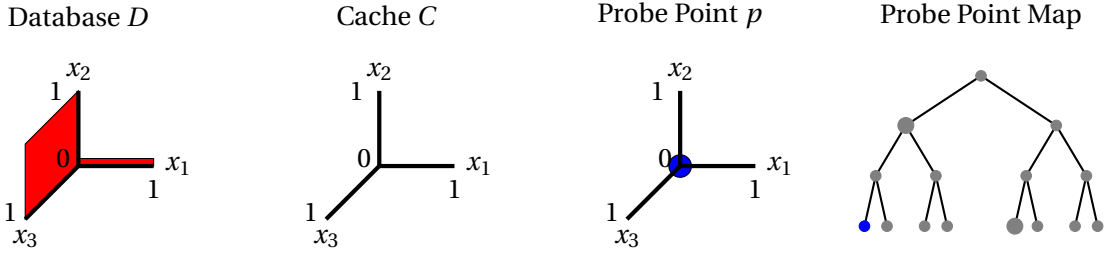
\begin{figure}[h]
\begin{center}
\begin{tikzpicture}

\node[] at (-4.0, 2) {Database $D$};

\filldraw[color=black, fill=red] (-4,0) -- (-4.707, -.707) -- (-4.707, .293) -- (-4, 1) -- cycle;
\filldraw[color=black, fill=red] (-4, .1) -- (-3, .1) -- (-3, 0) -- (-4,0) -- cycle;
\draw [ultra thick] (-4, 0) -- (-4,1.0) node (yaxis) [above] {$x_2$};
\draw[ultra thick] (-4, 0) -- (-3.0,0) node (xaxis) [right] {$x_1$};
\draw[ultra thick] (-4, 0) -- (-4.707, -.707) node (zaxis) [below] {$x_3$};
\node[] at (-3.0, -.3) {1};
\node[] at (-4.3, 1.0) {1};
\node[] at (-4.907, -.607) {1};
\node[] at (-4.2, .1) {0};

\node[] at (.1, 2) {Cache $C$};

\draw [ultra thick] (0, 0) -- (0,1.0) node (yaxis2) [above] {$x_2$};
\draw[ultra thick] (0, 0) -- (1.0,0) node (xaxis2) [right] {$x_1$};
\draw[ultra thick] (0, 0) -- (-.707, -.707) node (zaxis2) [below] {$x_3$};
\node[] at (1.0, -.3) {1};
\node[] at (-.3, 1.0) {1};
\node[] at (-.907, -.607) {1};
\node[] at (-.2, .1) {0};

\node[] at (4.0, 2) {Probe Point $p$};

\filldraw[color=black, fill=blue] (4.00, 0) circle (.15);
\draw [ultra thick] (4, 0) -- (4,1) node (yaxis3) [above] {$x_2$};
\draw[ultra thick] (4, 0) -- (5,0) node (xaxis3) [right] {$x_1$};
\draw[ultra thick] (4, 0) -- (3.293, -.707) node (zaxis3) [below] {$x_3$};
\node[] at (5.0, -.3) {1};
\node[] at (3.7, 1.0) {1};
\node[] at (3.093, -.607) {1};
\node[] at (3.8, .1) {0};

\node[] at (8.0, 2) {Probe Point Map};

\draw[thick] (8,1.2) -- (7, .54) -- (6.6, -.13) -- (6.45, -0.8);
\draw[thick] (6.6, -0.13) -- (6.75, -0.8);
\draw[thick] (7, .54) -- (7.4, -.13) -- (7.25, -.8);
\draw[thick] (7.4, -.13) -- (7.55, -.8);
\draw[thick] (8, 1.2) -- (9, .54) -- (8.6, -.13) -- (8.45, -.8);
\draw[thick] (8.6, -.13) -- (8.75, -.8);
\draw[thick] (9, .54) -- (9.4, -.13) -- (9.55, -.8);
\draw[thick] (9.4, -.13) -- (9.25, -.8);

\filldraw[color=gray] (8,1.2) circle (2pt);
\filldraw[color=gray] (7,.54) circle (3pt);
\filldraw[color=gray] (9,.54) circle (2pt);
\filldraw[color=gray] (6.6,-.13) circle (2pt);
\filldraw[color=gray] (7.4,-.13) circle (2pt);
\filldraw[color=gray] (8.6,-.13) circle (2pt);
\filldraw[color=gray] (9.4,-.13) circle (2pt);
\filldraw[color=blue] (6.45,-.8) circle (2pt);
\filldraw[color=gray] (6.75,-.8) circle (2pt);
\filldraw[color=gray] (7.25,-.8) circle (2pt);
\filldraw[color=gray] (7.55, -.8) circle (2pt);
\filldraw[color=gray] (9.55,-.8) circle (2pt);
\filldraw[color=gray] (9.25,-.8) circle (2pt);
\filldraw[color=gray] (8.75,-.8) circle (2pt);
\filldraw[color=gray] (8.45, -.8) circle (3pt);

\end{tikzpicture}
\caption{The initial state of the database $D$, cache $C$, and the location of the first probe point $p$, which will search the output space in a depth-first manner that can be tracked using the map on its right. $D$ is simply the union of all the boxes we created from the initial \SAT\ problem, while $p$ is set to an initial value of $\langle F, F,F\rangle$. $L$ is currently empty.}
\label{dbstate1}
\end{center}
\end{figure}

\begin{figure}[h]
\begin{center}

\begin{tikzpicture}

\node[] at (-4.0, 2) {Database $D$};

\filldraw[color=black, fill=red] (-4,0) -- (-4.707, -.707) -- (-4.707, .293) -- (-4, 1) -- cycle;
\filldraw[color=black, fill=red] (-4, .1) -- (-3, .1) -- (-3, 0) -- (-4,0) -- cycle;
\draw [ultra thick] (-4, 0) -- (-4,1.0) node (yaxis) [above] {$x_2$};
\draw[ultra thick] (-4, 0) -- (-3.0,0) node (xaxis) [right] {$x_1$};
\draw[ultra thick] (-4, 0) -- (-4.707, -.707) node (zaxis) [below] {$x_3$};
\node[] at (-3.0, -.3) {1};
\node[] at (-4.3, 1.0) {1};
\node[] at (-4.907, -.607) {1};
\node[] at (-4.2, .1) {0};

\node[] at (.1, 2) {Cache $C$};

\filldraw[color=black, fill=red] (0,0) -- (-.707, -.707) -- (-.707, .293) -- (0, 1) -- cycle;
\draw [ultra thick] (0, 0) -- (0,1.0) node (yaxis2) [above] {$x_2$};
\draw[ultra thick] (0, 0) -- (1.0,0) node (xaxis2) [right] {$x_1$};
\draw[ultra thick] (0, 0) -- (-.707, -.707) node (zaxis2) [below] {$x_3$};
\node[] at (1.0, -.3) {1};
\node[] at (-.3, 1.0) {1};
\node[] at (-.907, -.607) {1};
\node[] at (-.2, .1) {0};

\node[] at (4.0, 2) {Probe Point $p$};

\filldraw[color=black, fill=blue] (5.00, 0) circle (.15);
\draw [ultra thick] (4, 0) -- (4,1) node (yaxis3) [above] {$x_2$};
\draw[ultra thick] (4, 0) -- (5,0) node (xaxis3) [right] {$x_1$};
\draw[ultra thick] (4, 0) -- (3.293, -.707) node (zaxis3) [below] {$x_3$};
\node[] at (5.0, -.3) {1};
\node[] at (3.7, 1.0) {1};
\node[] at (3.093, -.607) {1};
\node[] at (3.8, .1) {0};

\node[] at (8.0, 2) {Probe Point Map};

\draw[thick] (8,1.2) -- (7, .54) -- (6.6, -.13) -- (6.45, -0.8);
\draw[thick] (6.6, -0.13) -- (6.75, -0.8);
\draw[thick] (7, .54) -- (7.4, -.13) -- (7.25, -.8);
\draw[thick] (7.4, -.13) -- (7.55, -.8);
\draw[thick] (8, 1.2) -- (9, .54) -- (8.6, -.13) -- (8.45, -.8);
\draw[thick] (8.6, -.13) -- (8.75, -.8);
\draw[thick] (9, .54) -- (9.4, -.13) -- (9.55, -.8);
\draw[thick] (9.4, -.13) -- (9.25, -.8);

\filldraw[color=gray] (8,1.2) circle (2pt);
\filldraw[color=orange] (7,.54) circle (3pt);
\filldraw[color=gray] (9,.54) circle (2pt);
\filldraw[color=gray] (6.6,-.13) circle (2pt);
\filldraw[color=gray] (7.4,-.13) circle (2pt);
\filldraw[color=gray] (8.6,-.13) circle (2pt);
\filldraw[color=gray] (9.4,-.13) circle (2pt);
\filldraw[color=gray] (6.45,-.8) circle (2pt);
\filldraw[color=gray] (6.75,-.8) circle (2pt);
\filldraw[color=gray] (7.25,-.8) circle (2pt);
\filldraw[color=gray] (7.55, -.8) circle (2pt);
\filldraw[color=gray] (9.55,-.8) circle (2pt);
\filldraw[color=gray] (9.25,-.8) circle (2pt);
\filldraw[color=gray] (8.75,-.8) circle (2pt);
\filldraw[color=blue] (8.45, -.8) circle (3pt);

\end{tikzpicture}
\caption{The state of the database, cache, and probe point after the first round of the algorithm. Our probe point found the box $\langle F, \lambda, \lambda \rangle$, so it added this box to $C$. Then, $p$ was advanced until it reached a point not contained by this box, which turned out to be $\langle T, F, F\rangle$. $L(1)$ is the box $\langle F, \lambda, \lambda \rangle$, which is the box corresponding to the orange vertex in the map; the array is empty elsewhere.}
\label{dbstate2}
\end{center}
\end{figure}
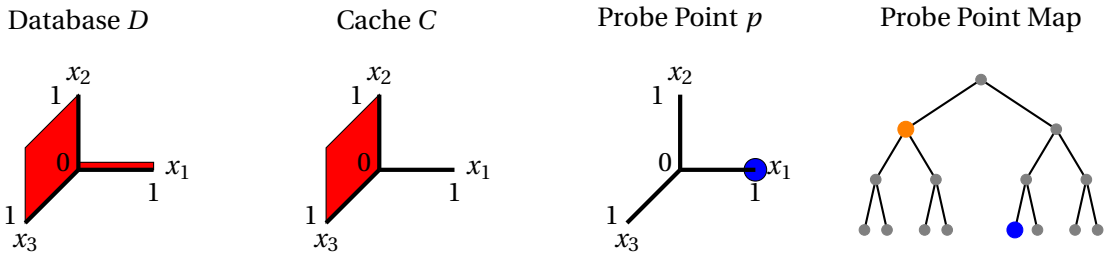

Again we scan $C$, this time for the probe point $\langle T, F, F\rangle$, and again we find no containing box in $C$. So we scan $D$ once again and find the containing box $\langle \lambda ,F,F\rangle$. We then insert this box into the cache and advance the probe point to $\langle T,F,T\rangle$. This time, although our containing box features a $\lambda$ at the first location, we determine the location in $L$ into which we will insert based on the location of the last non-$\lambda$ variable, so we insert it into $L[3]$. \par

    This time, we find no containing boxes in either the cache or the database. Therefore, we have found an output point (see Figure~\ref{dbstate3} for an illustration). We add $\langle T,F,T\rangle$ to our output set, then add the box $\langle T,F,T\rangle$ to our cache, which marks the point as found. \par

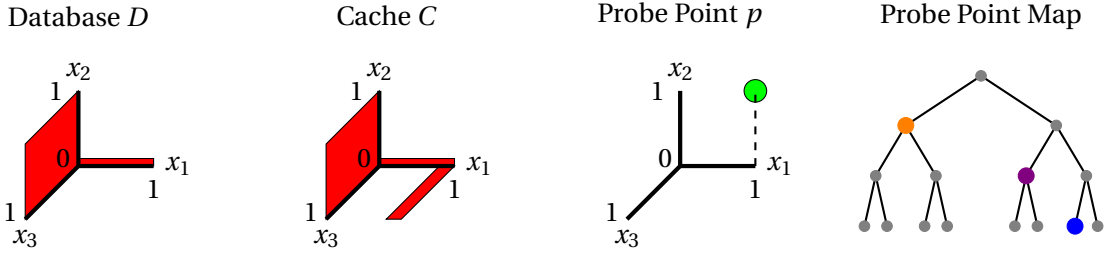
\begin{figure}[h]
\begin{center}

\begin{tikzpicture}

\node[] at (-4.0, 2) {Database $D$};

\filldraw[color=black, fill=red] (-4,0) -- (-4.707, -.707) -- (-4.707, .293) -- (-4, 1) -- cycle;
\filldraw[color=black, fill=red] (-4, .1) -- (-3, .1) -- (-3, 0) -- (-4,0) -- cycle;
\draw [ultra thick] (-4, 0) -- (-4,1.0) node (yaxis) [above] {$x_2$};
\draw[ultra thick] (-4, 0) -- (-3.0,0) node (xaxis) [right] {$x_1$};
\draw[ultra thick] (-4, 0) -- (-4.707, -.707) node (zaxis) [below] {$x_3$};
\node[] at (-3.0, -.3) {1};
\node[] at (-4.3, 1.0) {1};
\node[] at (-4.907, -.607) {1};
\node[] at (-4.2, .1) {0};

\node[] at (.1, 2) {Cache $C$};

\filldraw[color=black, fill=red] (0,0) -- (-.707, -.707) -- (-.707, .293) -- (0, 1) -- cycle;
\filldraw[color=black, fill=red] (0, .1) -- (1, .1) -- (1, 0) -- (0,0) -- cycle;
\filldraw[color=black, fill=red] (.8,0) -- (.093, -.707) -- (.293, -.707) -- (1.0, 0) -- cycle;
\draw [ultra thick] (0, 0) -- (0,1.0) node (yaxis2) [above] {$x_2$};
\draw[ultra thick] (0, 0) -- (1.0,0) node (xaxis2) [right] {$x_1$};
\draw[ultra thick] (0, 0) -- (-.707, -.707) node (zaxis2) [below] {$x_3$};
\node[] at (1.0, -.3) {1};
\node[] at (-.3, 1.0) {1};
\node[] at (-.907, -.607) {1};
\node[] at (-.2, .1) {0};

\node[] at (4.0, 2) {Probe Point $p$};

\filldraw[color=black, fill=green] (5.00, 1.0) circle (.15);
\draw [ultra thick] (4, 0) -- (4,1) node (yaxis3) [above] {$x_2$};
\draw[ultra thick] (4, 0) -- (5,0) node (xaxis3) [right] {$x_1$};
\draw[ultra thick] (4, 0) -- (3.293, -.707) node (zaxis3) [below] {$x_3$};
\draw[thick, dashed] (5, 0) -- (5, 1);
\node[] at (5.0, -.3) {1};
\node[] at (3.7, 1.0) {1};
\node[] at (3.093, -.607) {1};
\node[] at (3.8, .1) {0};

\node[] at (8.0, 2) {Probe Point Map};

\draw[thick] (8,1.2) -- (7, .54) -- (6.6, -.13) -- (6.45, -0.8);
\draw[thick] (6.6, -0.13) -- (6.75, -0.8);
\draw[thick] (7, .54) -- (7.4, -.13) -- (7.25, -.8);
\draw[thick] (7.4, -.13) -- (7.55, -.8);
\draw[thick] (8, 1.2) -- (9, .54) -- (8.6, -.13) -- (8.45, -.8);
\draw[thick] (8.6, -.13) -- (8.75, -.8);
\draw[thick] (9, .54) -- (9.4, -.13) -- (9.55, -.8);
\draw[thick] (9.4, -.13) -- (9.25, -.8);

\filldraw[color=gray] (8,1.2) circle (2pt);
\filldraw[color=orange] (7,.54) circle (3pt);
\filldraw[color=gray] (9,.54) circle (2pt);
\filldraw[color=gray] (6.6,-.13) circle (2pt);
\filldraw[color=gray] (7.4,-.13) circle (2pt);
\filldraw[color=violet] (8.6,-.13) circle (3pt);
\filldraw[color=gray] (9.4,-.13) circle (2pt);
\filldraw[color=gray] (6.45,-.8) circle (2pt);
\filldraw[color=gray] (6.75,-.8) circle (2pt);
\filldraw[color=gray] (7.25,-.8) circle (2pt);
\filldraw[color=gray] (7.55, -.8) circle (2pt);
\filldraw[color=gray] (9.55,-.8) circle (2pt);
\filldraw[color=blue] (9.25,-.8) circle (3pt);
\filldraw[color=gray] (8.75,-.8) circle (2pt);
\filldraw[color=gray] (8.45, -.8) circle (2pt);

\end{tikzpicture}
\caption{The state of the database, cache, and probe point after the first output point is found at $\langle T, F, T\rangle$. In getting here, we first found the box $\langle \lambda, F, F \rangle$, and then probed the output point. After finding the output point, that box was resolved with the aforementioned box to produce $\langle T, F, \lambda\rangle$, which was also added to $C$. Note that, while the box $\langle \lambda, F, \lambda \rangle$ could be produced at this juncture, \Tetris\ will not do so. $L(1)$ contains $\langle F, \lambda, \lambda \rangle$ (the orange dot); $L(2)$ contains $\langle T, F, \lambda \rangle$ (the purple dot); $L(3)$ is empty. }
\label{dbstate3}
\end{center}
\end{figure}

    At this juncture, we find that the last non-$\lambda$ variable is at location 3, but this time, it is \true. Therefore, we will extract the same-length box we stored previously at $L[3]$ and resolve it with this box. We know that this will be a legal resolution because we are scanning the output space in a tree-like fashion. This means that, when retreating from a right branch, the box containing the corresponding left branch must be able to contain the right branch if the final non-$\lambda$ variable were set to $\lambda$ instead. It follows that this final non-$\lambda$ variable must be the one and only pivot point between the two.\par
     Therefore, we can and do perform this resolution; in this example, it is $\langle T,F,T\rangle$ resolved with $\langle \lambda , F, F\rangle$. This outputs the box $\langle T,F,\lambda \rangle$. We furthermore store this box in $L$ at location 2, since this box ends with false at that index. \par

    We continue forth with probe points $\langle T,T,F\rangle$ and $\langle T,T,T\rangle$. Neither will be found in either $D$ or $C$, and are therefore output points. Both again have their final non-$\lambda$ variables at index 3, with  $\langle T,T,F\rangle$ being inserted into $L$ at that index and then $\langle T,T,T\rangle$ recovering that box so it can resolve with it to form the box  $\langle T,T,\lambda \rangle$. This time, the output of our resolution ends with $\true$, so we recover the box at index 2 in $L$, $\langle T,F,\lambda \rangle$, and take the resolution of these two boxes, giving us $\langle T, \lambda, \lambda \rangle$. Once again this ends in T, so we can resolve it with the box we found back at the beginning that has been waiting in slot 1, $\langle F, \lambda, \lambda \rangle$, to form the box $\langle \lambda, \lambda, \lambda \rangle$. This box completely covers the output space; therefore, the algorithm knows that it has found all possible output points and terminates (see Figure~\ref{dbstate4} for an illustration).

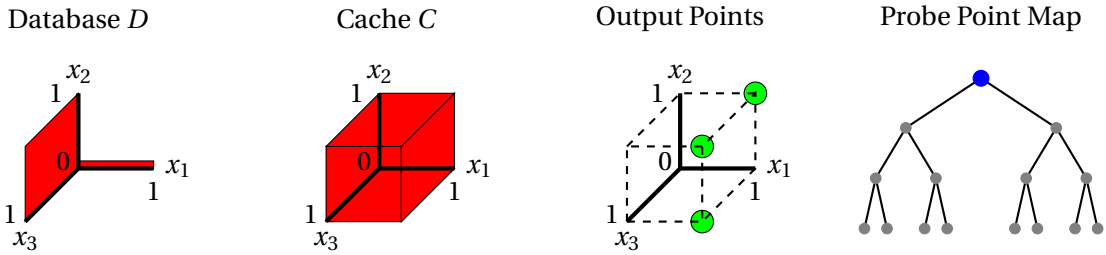
\begin{figure}[h]
\begin{center}

\begin{tikzpicture}
\node[] at (-4.0, 2) {Database $D$};

\filldraw[color=black, fill=red] (-4,0) -- (-4.707, -.707) -- (-4.707, .293) -- (-4, 1) -- cycle;
\filldraw[color=black, fill=red] (-4, .1) -- (-3, .1) -- (-3, 0) -- (-4,0) -- cycle;
\draw [ultra thick] (-4, 0) -- (-4,1.0) node (yaxis) [above] {$x_2$};
\draw[ultra thick] (-4, 0) -- (-3.0,0) node (xaxis) [right] {$x_1$};
\draw[ultra thick] (-4, 0) -- (-4.707, -.707) node (zaxis) [below] {$x_3$};
\node[] at (-3.0, -.3) {1};
\node[] at (-4.3, 1.0) {1};
\node[] at (-4.907, -.607) {1};
\node[] at (-4.2, .1) {0};

\node[] at (.1, 2) {Cache $C$};

\filldraw[color=black, fill=red] (0,0) -- (1,0) -- (0.293,-.707) -- (-.707, -.707) -- cycle;
\filldraw[color=black, fill=red] (0.293,-.707) -- (-.707, -.707) -- (-.707, .293) -- (.293, .293) -- cycle;
\filldraw[color=black, fill=red] (1.0, 1.0) -- (.293, .293) -- (.293, -.707) -- (1.0, 0) -- cycle;
\filldraw[color=black, fill=red] (-.707, .293) -- (.293, .293) -- (1.0, 1.0) -- (0, 1.0) -- cycle;
\draw [ultra thick] (0, 0) -- (0,1.0) node (yaxis2) [above] {$x_2$};
\draw[ultra thick] (0, 0) -- (1.0,0) node (xaxis2) [right] {$x_1$};
\draw[ultra thick] (0, 0) -- (-.707, -.707) node (zaxis2) [below] {$x_3$};
\node[] at (1.0, -.3) {1};
\node[] at (-.3, 1.0) {1};
\node[] at (-.907, -.607) {1};
\node[] at (-.2, .1) {0};

\node[] at (4.0, 2) {Output Points};

\filldraw[color=black, fill=green] (5.00, 1.0) circle (.15);
\filldraw[color=black, fill=green] (4.293, .293) circle (.15);
\filldraw[color=black, fill=green] (4.293, -.707) circle(.15);
\draw [ultra thick] (4, 0) -- (4,1) node (yaxis3) [above] {$x_2$};
\draw[ultra thick] (4, 0) -- (5,0) node (xaxis3) [right] {$x_1$};
\draw[ultra thick] (4, 0) -- (3.293, -.707) node (zaxis3) [below] {$x_3$};
\draw[thick, dashed] (3.293, -.707) -- (4.293, -.707) -- (5.0, 0) -- (5, 1) -- (4.293, .293) -- (4.293, -.707);
\draw[thick, dashed] (3.293, -.707) -- (3.293, .293) -- (4.293, .293);
\draw[thick, dashed] (3.293, .293) -- (4, 1) -- (5, 1);
\node[] at (5.0, -.3) {1};
\node[] at (3.7, 1.0) {1};
\node[] at (3.093, -.607) {1};
\node[] at (3.8, .1) {0};

\node[] at (8.0, 2) {Probe Point Map};

\draw[thick] (8,1.2) -- (7, .54) -- (6.6, -.13) -- (6.45, -0.8);
\draw[thick] (6.6, -0.13) -- (6.75, -0.8);
\draw[thick] (7, .54) -- (7.4, -.13) -- (7.25, -.8);
\draw[thick] (7.4, -.13) -- (7.55, -.8);
\draw[thick] (8, 1.2) -- (9, .54) -- (8.6, -.13) -- (8.45, -.8);
\draw[thick] (8.6, -.13) -- (8.75, -.8);
\draw[thick] (9, .54) -- (9.4, -.13) -- (9.55, -.8);
\draw[thick] (9.4, -.13) -- (9.25, -.8);

\filldraw[color=blue] (8,1.2) circle (3pt);
\filldraw[color=gray] (7,.54) circle (2pt);
\filldraw[color=gray] (9,.54) circle (2pt);
\filldraw[color=gray] (6.6,-.13) circle (2pt);'
\filldraw[color=gray] (7.4,-.13) circle (2pt);
\filldraw[color=gray] (8.6,-.13) circle (2pt);
\filldraw[color=gray] (9.4,-.13) circle (2pt);
\filldraw[color=gray] (6.45,-.8) circle (2pt);
\filldraw[color=gray] (6.75,-.8) circle (2pt);
\filldraw[color=gray] (7.25,-.8) circle (2pt);
\filldraw[color=gray] (7.55, -.8) circle (2pt);
\filldraw[color=gray] (9.55,-.8) circle (2pt);
\filldraw[color=gray] (9.25,-.8) circle (2pt);
\filldraw[color=gray] (8.75,-.8) circle (2pt);
\filldraw[color=gray] (8.45, -.8) circle (2pt);

\end{tikzpicture}

\caption{The state of the database, cache, and probe point after the entire output space has been covered. With each further output point discovered, boxes were added to the cache, which produced a chain of resolutions that eventually resulted in the production of the box $\langle \lambda, \lambda, \lambda \rangle$. Note that there is no longer a probe point, as there is nothing left to probe.}
\label{dbstate4}
\end{center}
\end{figure}

\section{Our Improvements}
\label{sec:cw}
Here, we will discuss the major additions introduced into \Tetris\ in order to handle \CNF\ inputs and increase practical efficiency. These include a new data structure, work on heuristically determining a global variable ordering, and selectively caching only certain boxes.

\subsection{Data Structure and Compression}
\label{Data_Structure}
\label{sec:ds}
\label{SEC:DS}
For its data structure, the original \Tetris\ paper simply states that a trie will suffice to achieve asymptotic runtime guarantees. While this is true, a simple trie still leaves much to be desired; attempts to implement \Tetris\ in such a simple matter produced a system significantly slower than other state-of-the-art model counters. Our contribution is to design a novel system of tries that takes advantage of the nature of the problem space to improve both runtime and memory usage.\par
\subsubsection{Data Structure Description}
    As described above, the database must allow each variable to store three values: \false, \true, and $\lambda$. Therefore, the immediate approach is to use 3-tries as the base data structure. However, when \SharpSAT\ instances routinely have hundreds of variables, this results in an extremely deep problem space that requires a lot of time to probe. Our next step, then, is to compress multiple layers into a single node that can be queried in a single instruction.\par
    To this end, we will first come up with a means to enumerate all possible boxes: 
\begin{defn}
Let $\phi$ be a bijective function from the set of boxes onto the integers.
\end{defn}
\begin{example}
One such way is as follows: Let $\lambda$ be assigned 1; $\false$, 2; $\true$, 3. Then, for each box $\langle b_1, b_2, ..., b_n\rangle$, its numerical value is $3^{n-1}b_1 + 3^{n-2}b_2 + ... + b_n$.\footnote{This is exactly the mapping used in our implementation.} This gives us a bijective ternary numeration. 
\end{example}

From there, we observe that there is exactly one box for which $n$ is 0 (namely, the empty box $\langle \rangle$), three boxes with $n$ equal to 1, nine boxes with $n$ equal to 2, twenty-seven boxes with $n$ equal to 3, and eighty-one boxes with $n$ equal to 4. We will require that a single node within the trie be able to record a box of any of these lengths. Additionally, it must be able to store the children of all possible boxes with $n$ equal to 4. Therefore, by compacting four logical layers into a single layer within the database, the result is a trie that can store 121 possible boxes, the sum of the five aforementioned values, and can have 81 possible children. We will refer to this collection of variables as a cluster. \par
\begin{defn}[Cluster]
A {\em Cluster} is the set of variables that the database handles in a single operation. By default, each cluster contains 4 variables.
\end{defn}

    Of course, this raises a new issue: When checking if the database contains a given input string $x$, there can exist up to sixteen children of $x$ that must be checked (since each $T$ or $F$ can be replaced by a $\lambda$), and an even greater number of boxes that could be contained in this cluster may contain the input string. This creates the need for a way to quickly and efficiently determine if a containing box exists in this cluster and to create the list of children to be searched. \par

\subsubsection{\SIMD -based Trie}
\label{SIMD_sample}
    Here, we take inspiration from EmptyHeaded, a relational database engine \cite{emptyheaded}, and utilize \SIMD. As an example, let us consider the simplified version of the data structure that contains only two layers, and with $n = 3$. Suppose that $\langle F, \lambda, \lambda \rangle$ was known to be a box, and that $\langle \lambda, T, F\rangle$ and $\langle T, T, F \rangle$ are boxes that, to be found, necessitate traversing into child clusters. We can see this depicted in Figure \ref{simd_picture}. Now, we will determine whether or not this data structure contains the box $\langle F, T, F \rangle$. Since each cluster contains two layers, clusters with depth 0 will look at only the first two variables in the input box to determine input; therefore, let us consider the sub-box $\langle F, T \rangle$. Using $\phi$, we can find in a lookup table the two 128-bit bitstrings corresponding to this input sub-box. The first lists the set of boxes that, if they exist within the cluster, would contain $\langle F, T \rangle$, and is the second line of Figure \ref{simd_bitstring}. The second bitstring does much the same for the set of children that, if truncated to two variables, contain $\langle F, T\rangle$. Additionally, a cluster stores two more 128-bit bitstrings: \boxes, which marks the boxes contained by the cluster; and \children, which marks the child nodes of the cluster. The \boxes\ bitstring is specifically the top line of Figure \ref{simd_bitstring}. \boxes\ is marked for the box $\langle F \rangle$ (which is equivalent to $\langle F, \lambda, \lambda \rangle$), while \children\ has bits marked corresponding to the box prefixes $\langle F, T \rangle$ and $\langle T, T \rangle$. \par
    It follows that the intersection of these two pairs of bitstrings is the set of boxes and child prefixes present in the data structure that contain the input; a single AND operation suffices to calculate it. \par
\begin{figure}[h]
\begin{center}
\begin{tikzpicture}

\draw[thick] (-3, -1) -- (-1, 0) -- (1, -1) -- (1, -2) -- (1, -3);
\draw[thick] (-1, 0) -- (-1, -1);
\draw[thick] (-3, -1) -- (-3, -2) -- (-3, -3);
\filldraw[gray] (-1,0) circle (2pt);
\node[] at (-1,.3) {$\emptyset$};
\filldraw[gray] (-3, -1) circle (2pt);
\node[] at (-3.4, -1) {$\langle \lambda \rangle$};
\filldraw[gray] (-1, -1) circle (2pt);
\node[] at (-.6, -1) {$\langle F \rangle $};
\filldraw[gray] (1, -1) circle (2pt);
\node[] at (1.4, -1) {$\langle T \rangle $};
\filldraw[gray] (-3, -2) circle (2pt);
\node[] at (-2.4, -2) {$\langle \lambda, T \rangle$};
\filldraw[gray] (1, -2) circle (2pt);
\node[] at (1.6, -2) {$\langle T, T \rangle$};
\filldraw[gray] (-3, -3) circle (2pt);
\filldraw[gray] (1, -3) circle (2pt);
\node[] at (-2.15, -3) {$\langle \lambda, T, F\rangle $};
\node[] at (1.85, -3) {$\langle T, T, F\rangle$};

\draw[] (-5.5, .5) -- (-6, .5) -- (-6, -2.3) -- (-5.5, -2.3);
\draw[] (-5.5, -2.7) -- (-6, -2.7) -- (-6, -3.0);
\draw[] (-6, -3.0) -- (-6, -4.5);
\draw[] (-6, -4.5) -- (-5.5, -4.5);

\node[rotate=90] at (-6.3, -1) {Cluster 0};
\node[rotate=90] at (-6.3, -3.6) {Cluster 1};
\node[] at (-5.3, 0) {Layer 0};
\node[] at (-5.3, -1) {Layer 1};
\node[] at (-5.3, -2) {Layer 2};
\node[] at (-5.3, -3) {Layer 3};
\node[] at (-5.3, -4) {Layer 4};

\draw[blue] (-3.2, -3.2) -- (-3.2, -2.8) -- (-2.8, -2.8) -- (-2.8, -3.2) -- cycle;
\draw[blue] (-1.2, -1.2) -- (-1.2, -.8) -- (-.8, -.8) -- (-.8, -1.2) -- cycle;
\draw[blue] (.8, -3.2) -- (.8, -2.8) -- (1.2, -2.8) -- (1.2, -3.2) -- cycle;

\draw[ultra thick, green] (-3.2, -2.8) -- (-3, -3) -- (-2.7, -2.5);
\draw[ultra thick, green] (-1.2, -.8) -- (-1, -1) -- (-.7, -.5);

\end{tikzpicture}

\caption{The \SIMD\ operation, shown as the associated clusters, with each box marked by a blue box. Each layer corresponds to a variable (with Layer 4 being empty because all boxes have only three variables), and checkmarks mark the boxes that are found to be containing boxes. The central branch is created from the box $\langle F, \lambda, \lambda\rangle$. In the left branch, created by $\langle \lambda, T, F\rangle$, a child is created corresponding to the sub-box $\langle \lambda, T\rangle$, and then the box is inserted in the child cluster at $\langle \lambda, T, F\rangle$. In the right branch, created by $\langle T, T, F\rangle$, we similarly create a child corresponding to the sub-box $\langle T, T\rangle$, and then inserting $\langle T, T, F\rangle$ in the child cluster. We have drawn the database here as a 3-trie; to see it as a single, flattened cluster, see Figure \ref{cluster_flat}.}
\label{simd_picture}
\end{center}
\end{figure}
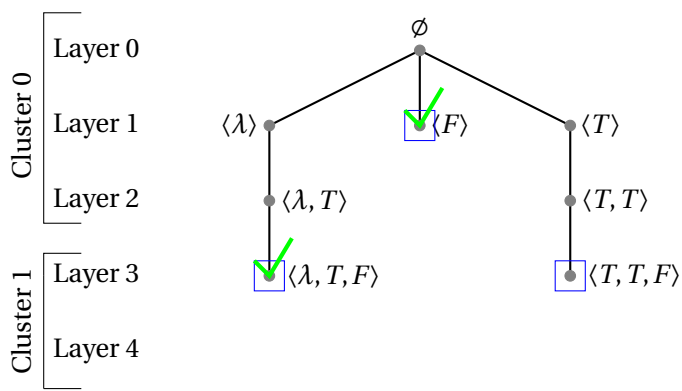

\begin{figure}[h]
\begin{center}
\begin{tikzpicture}[scale=1]

\node[] at (-2.90, .05) {Database's \boxes : \texttt{\ 0 0 1 0 0 0 0 0 0 0 0 0 0}};
\draw[] (-3.85, .25) -- (1.35, .25) -- (1.35, -.15) -- (-3.85, -.15) -- cycle;
\foreach \x in {1,...,12}
{
   \draw[] (-3.85 + \x * .4, .25) -- (-3.85 + \x * .4, -.15);
}

\node[] at (-1.3, -.9) {\huge \textbf{\&}};

\node[] at (-2.41, -1.6) {Input: $\langle F, T \rangle$: \texttt{\ 1 1 1 0 1 1 0 1 0 1 0 0 0}};
\draw[] (-3.85, .-1.35) -- (1.35, -1.35) -- (1.35, -1.75) -- (-3.85, -1.75) -- cycle;
\foreach \x in {1,...,12}
{
   \draw[] (-3.85 + \x * .4, -1.35) -- (-3.85 + \x * .4, -1.75);
}

\node[] at (-1.3, -2.7) {\huge \textbf{=}};

\node[] at (-2.8, -3.2) {Containing Boxes: \texttt{\ 0 0 1 0 0 0 0 0 0 0 0 0 0}};
\draw[] (-3.85, .-2.95) -- (1.35, -2.95) -- (1.35, -3.35) -- (-3.85, -3.35) -- cycle;
\foreach \x in {1,...,12}
{
   \draw[] (-3.85 + \x * .4, -2.95) -- (-3.85 + \x * .4, -3.35);
}
\foreach \x in {0, ..., 12}
{
\node[] at (-3.65 + \x * .4, -.4) {\x};

\node[] at (-3.65 + \x * .4, -2.0) {\x};

\node[] at (-3.65 + \x * .4, -3.6) {\x};
}

\end{tikzpicture}
\caption{The algorithm takes the logical AND of the stored bitstrings, top, with the bitstring corresponding to the input, row 2, in order to produce the list of containing boxes and children, bottom. The \boxes\ bitstring contains a single 1 to mark the box $\langle F, \lambda \rangle$. The second line contains all the potential boxes that would contain $\langle F, T\rangle$; in addition to the boxes that do exist, these include $\langle \lambda \rangle$, $\langle F, T\rangle$, and many more. The output of the operation has bits set corresponding to the box $\langle F, \lambda \rangle$.}
\label{simd_bitstring}
\end{center}
\end{figure}
     Let us inspect our output. In this particular example, we find that we have matched the box $\langle F, \lambda \rangle$ and the child with prefix $\langle \lambda, T \rangle$. \par
    At this point, we are faced with an interesting choice. Suppose for the sake of argument that the child eventually leads to some containing box for the input. Now, if this is a check for containment, the algorithm can only return the one box that it considers the ``best" containing box. Which one, then, should the algorithm choose? In general, it will choose the shortest-length box; in this example, it will choose the box $\langle F, \lambda \rangle$. The reason for this is simple: the more $\lambda$s in a box, the more space it covers; and the more space it covers, the more output space it can cover. Additionally, when those $\lambda$s are at the tail end  of a box, we know that we can immediately advance the probe point considerably. On the other hand, when the $\lambda$s are in the middle, the algorithm must re-find these boxes every time it scans a point that is contained within this hypothetical box. This repetition is costly; we would rather avoid it.\par
    Now, let us consider how our algorithm would handle this input if we were simply using traditional tries; in other words, consider what would happen if our cluster size were 1, as in a standard 3-trie. This algorithm would query the input box for its first value, find that it is $F$, and know that it could check both the $\lambda$ branch and the $F$ branch. Since $\lambda$s are generally to be preferred, it would take the $\lambda$ branch. Then, it would match the $T$ value to the $T$ branch, and proceed to the third layer, where it would match the $F$ value to the $F$ branch and find a containing box, which it will return. In this way, it has taken us three comparisons to find a containing box. Furthermore, the box found in the 3-trie version is of lower quality: we would have preferred to find the $\langle F, \lambda, \lambda \rangle$ box instead. \par
    On the other hand, let us consider what occurs when using full, four-variable clusters (Figure \ref{cluster_flat}). In this case, a single operation immediately finds all three containing boxes. Therefore, we have outperformed the traditional variation both in terms of number of comparisons and in terms of the quality of the output. \par
    In summary, each box is stored as a single bit in a 128-bit vector (with the last 7 bits unused), as is a record of whether or not a given child exists. A lookup table is used to find the 128-bit vectors corresponding to the possible outputs. Then, the former two are concatenated, as are the latter two, and all are compared using a single 256-bit \sfAND\ operation. It can be seen that the output of this operation must be the intersection of the potential containing boxes and children, found from the lookup table, and the ones that actually exist. Hence,  by calculating $\phi (b)$ for some box $b$, we can quickly find a box containing $b$, if it exists; and if it does not, we can quickly generate the exact list of children to examine.\par
    Additionally, in practice, it turns out that there is a certain sub-box value that shows up far more often than any other: the sequence $\langle\lambda ,\lambda ,\lambda ,\lambda \rangle$. Notably, this is the very sequence that accepts every single possible input string. Therefore, in the event that a layer contains only this child and no words, it is in fact possible to skip the entire layer. In practice, this produces significant savings on both computational costs and memory usage, which contributes towards Theoretical Implication \ref{implications:time-space}.\par

Let us now formally define the data structures and its algorithms used in our implementation:

\begin{defn}[Data Structure]
The data structure is a 121-ary trie, where each node of the trie is called a {\em cluster}. The top level consists of a pointer to the root cluster, which is the cluster covering the first four variables in the ordering, and can perform the following operations: \Insert\ (Algorithm \ref{algo_insert}), \Contains\ (Algorithm \ref{algo_contains}), and \GetAllContainingBoxes\ (Algorithm \ref{algo_gacb}).
\end{defn}

\begin{defn}[Cluster]
Each cluster in the database contains two bitstrings, $\boxes$  and $\children$, bitstrings identifying the sets of boxes and children, respectively, along with an integer $\depth$ that informs the cluster of its depth. The operation $.at(i)$ on a bitstring calculates the intersection of that bitstring with a bitstring retrieved from a lookup table that lists the set of boxes or children that contain or potentially contain, respectively, the box with value $i$; setting $.at(i)$ to 1 sets the specific bit referring exactly to that box. Each cluster corresponds to four layers of a standard 3-ary trie.
\end{defn}

\begin{figure}[h]
\begin{center}
\begin{tikzpicture}

\draw[thick] (-3, 0) -- (-6, -1);
\draw[thick] (-3, 0) -- (-4, -1);
\draw[thick] (-3, 0) -- (-1, -1);

\filldraw[gray] (-3,0) circle (2pt);
\node[] at (-3,.3) {$\emptyset$};
\filldraw[gray] (-6, -1) circle (2pt);
\node[] at (-6.5, -1) {$\langle F \rangle$};
\filldraw[gray] (-4, -1) circle (2pt);
\node[] at (-3, -1) {$\langle \lambda, T, F\rangle$};
\filldraw[gray] (-1, -1) circle (2pt);
\node[] at (0, -1) {$\langle  T, T, F\rangle$};

\draw[] (-6.5, .5) -- (-7, .5) -- (-7, -1.5) -- (-6.5, -1.5);

\node[rotate=90] at (-7.3, -.6) {Cluster 0};

\end{tikzpicture}
\caption{The clusters in Figure \ref{simd_picture}, flattened out into a single node with four variables to a cluster. This is how the node would be stored in the actual database. Note that it is far more compressed than when drawn out as a trie as in Figure~\ref{simd_picture}. This cluster would have a \depth\ of 0, have three bits set in \boxes, and no bits set in \children.}
\label{cluster_flat}
\end{center}
\end{figure}
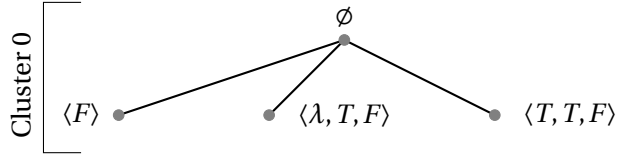

\begin{defn}[Index of a Box]
The index of a box $b$, $index(b)$, is the location of the last non-$\lambda$ variable in that box.
\end{defn}
\begin{example}
The index of the box $b = \langle F, T, \lambda, F, \lambda \rangle$, $index(b)$, is 4.
\end{example}

\begin{algorithm}[H]
\caption{$\Insert$}
\label{algo_insert}
\begin{algorithmic}[1]
\STATE{Given box $b$ to insert,  $b = \langle c_1, c_2, ..., c_n \rangle,$ where each $c_i$ here is a cluster consisting of $b_{i_1}...b_{i_4}$}
\STATE{$k \gets index(b)$}
\STATE{Call \InsertCluster($b, k$) (Algorithm \ref{algo_insert_cluster}) called on the root cluster}
\end{algorithmic}
\end{algorithm}

\begin{algorithm}[H]
\caption{\InsertCluster}
\label{algo_insert_cluster}
\begin{algorithmic}[1]
\STATE{Input: A cluster $T$ with depth \depth to insert on, box $b$ to insert,  $b = \langle c_1, c_2, ..., c_n \rangle,$ where each $c_i$ here is a cluster consisting of $b_{i_1}, ..., b_{i_4}$, and the location of the final non-$\lambda$ cluster k}
\IF{$\depth= k$}
\IF{$\boxes.at(\phi(c_{\depth}))$ is nonempty}
\STATE{Return} \COMMENT{\texttt{If a containing box of $b$ is already in the data structure, stop.}}
\ELSE
\STATE{Set \boxes.at$(\phi(c_{\depth}))$ to 1}
\ENDIF
\ELSE
\IF{$\children.at(\phi(c_{\depth})) \ne 1 $}
\STATE{Create a child cluster at location $\phi(c_{\depth})$ with depth $(\depth + 1)$.}
\STATE{Set $\children.at(\phi(c_{\depth}))$ to 1}
\ENDIF
\STATE{Call \InsertCluster\ on the cluster indexed as $\phi(c_{\depth})$.}\
\ENDIF
\STATE{Return}
\end{algorithmic}
\end{algorithm}

In the \Insert\ algorithms, Algorithms \ref{algo_insert} and \ref{algo_insert_cluster}, we recursively traverse clusters until we find the appropriate location in the data structure, and then set the appropriate bit to 1. Along the way, we will check for containing boxes and immediately cease operation if one is found. Furthermore, if a child cluster that contains the box we are inserting, or that contains a cluster along the path to that cluster, does not exist, we create it. \par

\begin{algorithm}[H]
\caption{$\Contains$}
\label{algo_contains}
\begin{algorithmic}[1]
\STATE{Given box $b$ to find a containing box of,  $b = \langle c_1, c_2, ..., c_n \rangle,$ where each $c_i$ here is a cluster consisting of $b_{i_1}...b_{i_4}$}
\STATE{$output \gets \ContainsCluster(b)$ (Algorithm \ref{algo_contains_cluster}) called on the root cluster}
\STATE{Return $output$}
\end{algorithmic}
\end{algorithm}

\begin{algorithm}[H]
\caption{$\ContainsCluster$}
\label{algo_contains_cluster}
\begin{algorithmic}[1]
\STATE{Input: Cluster $T$ with depth \depth that is being checked for a containing box, box $b$ to check containment of,  $b = \langle c_1, c_2, ..., c_n \rangle,$ where each $c_i$ here is a cluster consisting of $b_{i_1}...b_{i_4}$.}
\IF{($f = \boxes.at(\phi(c_{\depth})))$ is nonempty (\texttt{i.e., there is at least one box in the intersection})} 
\STATE{$o = min_{x \in f} Index(x)$} 
\STATE{Return $o$}
\ELSIF {($f = \children.at(\phi(c_{\depth})))$ is nonempty (\texttt{i.e., there is at least one child in the intersection})}
\FORALL{Children $k \in f$}
\STATE {$a = k.\ContainsCluster(b)$} \COMMENT{\texttt{Scan these in order of increasing index}}
\IF{$a$ is nonempty}
\STATE {Return $a$}
\ENDIF
\ENDFOR
\ELSE
\STATE{Return $\emptyset$}
\ENDIF
\end{algorithmic}
\end{algorithm}

In the \Contains\ algorithms, Algorithms \ref{algo_contains} and \ref{algo_contains_cluster}, we check the database to see if it contains any box that contains some box $b$. Therefore, we traverse along clusters in our path, first checking to see if any containing boxes exist; if we find one, we return that box immediately and cease checking further. If none exists, we will perform a depth-first search of the children that could potentially contain a containing box. This process continues until either a containing box is found, or else the search space is exhausted and it is determined that no containing box exists. \par

\begin{algorithm}[H]
\caption{$\GetAllContainingBoxes$}
\label{algo_gacb}
\begin{algorithmic}[1]
\STATE{Given box $b$ to find all containing boxes of,  $b = \langle c_1, c_2, ..., c_n \rangle,$ where each $c_i$ here is a cluster consisting of $b_{i_1}...b_{i_4}$}
\STATE{$output \gets \GetAllContainingBoxesCluster(b)$ (Algorithm \ref{algo_gacb_cluster}) called on the root cluster}
\STATE{Return $output$}
\end{algorithmic}
\end{algorithm}

\begin{algorithm}[H]
\caption{$\GetAllContainingBoxesCluster$}
\label{algo_gacb_cluster}
\begin{algorithmic}[1]
\STATE{Input: Cluster $T$ with depth \depth\ that is being checked, box $b$ to find all containing boxes of,  $b = \langle c_1, c_2, ..., c_n \rangle,$ where each $c_i$ here is a cluster consisting of $b_{i_1}...b_{i_4}$.}
\STATE{$O \gets \emptyset$}
\IF{($F = \boxes.at(\phi(c_{\depth})))$ is nonempty  (\texttt{i.e., there is at least one box in the intersection})} 
\STATE{$O = F$.} 
\ELSIF {($f = \children.at(\phi(c_{\depth})))$ is nonempty (\texttt{i.e., there is at least one child in the intersection})}
\FORALL{Children $k \in f$}
\STATE {$A = k.\GetAllContainingBoxesCluster(b)$}
\STATE {$O = O \cup A$}
\ENDFOR
\ELSE
\STATE{Return $O$}
\ENDIF
\end{algorithmic}
\end{algorithm}

The \GetAllContainingBoxes\ algorithms, Algorithms \ref{algo_gacb} and \ref{algo_gacb_cluster}, are similar to the \Contains\ algorithms. There are, however, two key differences. First, while \Contains\ terminates as soon as it found a single containing box, \GetAllContainingBoxes\ will continue. Secondly, it returns the set of all containing boxes, rather than just one; hence, the name. In all other regards, it behaves exactly as \Contains\ does. \par

\subsection{Global Variable Ordering}
\label{sec:gvo}
\label{SEC:GVO}
Up until this point, we have simply been assuming that all boxes must order their variables in exactly the same order as they appear in the original \SAT\ formulas. In other words, in each box $b$, $b_1$ must correspond to $x_1$, $b_2$ must correspond to $x_2$, and so on. However, this does not need to be the case. We can reorder the variables, and by doing so can greatly improve the runtime of our system.\par
The original \Tetris\ paper cites the importance of the variable ordering. However, it assumes that there exists an exponential in $n$ time algorithm to compute the optimal variable ordering. While this is justifiable in the context of join problems, where $n$ is small compared to the size of the database, in \SAT\ problems it is unacceptable. Furthermore, as computing the optimal ordering is NP-hard, we cannot hope to improve upon this result. Indeed, even approximating the ordering is intractable~\cite{marx}.\par
    Nevertheless, initial experimentation with \Tetris\ made clear how impactful this choice can be. A slight variation in ordering can result in a large difference in runtime. Thus, we turned to various heuristics and intuitions in order to find a quick and effective means to generate an ordering that works well in practice and thereby contribute to Theoretical Implication \ref{implications:ordering}.\par

    First, let us define a few terms that we will use in our discussion of various ordering strategies:
\begin{defn}[Degree]
The degree of a variable is the number of clauses of which the variable is a part.
\end{defn}
\begin{example}
In Example \ref{exprob1}, $(x_1 \vee x_2) \wedge (x_1 \vee \bar{x_2}) \wedge (x_2 \vee x_3)$, $x_1$ has degree 2, $x_2$ has degree 3, and $x_3$ has degree 1.
\end{example}

\begin{defn}[Closeness]
Variables $x_i$ and $x_j$ are said to be {\em close} if there exists a clause that includes both $x_i$ and $x_j$. The fewer terms in the clause, the closer the two variables are said to be. Specifically, the closeness of two variables, $\theta (x_1, x_2)$, is equal to 1 divided by the size of the smallest clause containing both variables minus 1.
\end{defn}
\begin{example}
In the clause $(x_1 \vee x_2)$, $x_1$ and $x_2$ would have a closeness of 1; and in the clause $(x_1 \vee x_2 \vee x_3)$, $x_1$ and $x_3$ would have a closeness of $\frac{1}{2}$. If both clauses were part of the same \SAT\ problem, $x_1$ and $x_2$ would still have a closeness of 1 because the first clause has a smaller size than the second.
\label{closeness_example}
\end{example}

\begin{defn}[Interconnectedness]
The interconnectedness of a cluster $C$, $IC(C)$, is the sum of the closeness values for each $\binom{4}{2}$ pairs of variables in the cluster.
\end{defn}
\begin{example}
Using Example \ref{closeness_example}, if $x_1, x_2,$ and $x_3$ compose a cluster, its interconnectedness would be 2, since $\theta (x_1, x_2) = 1$, $\theta(x_1, x_3) = \frac{1}{2},$ and $\theta(x_2, x_3) = \frac{1}{2}$.
\end{example}

    In general, we note two high-level strategies that we have found improve the performance of a given global variable ordering:
    \paragraph{a) High-degree First} In \Tetris, handling high-degree variables early tends to improve performance. To see the reason for this, we must consider the nature of the algorithm. Scattering high-degree variables  throughout the ordering forces the algorithm to branch frequently, which means that, when testing for inclusion or for containing boxes, the algorithm must scan all possible branches. This is highly inefficient. Instead, we focus the branches as much as possible to the beginning, with the hope being that, as the algorithm progresses down from there, most layers will have few if any divergent choices. If this is true, then the inclusion check can be handled quickly. Let us proceed to see why this is so.
    First, we will introduce an example of why placing high-degree variables early in the ordering proves effective. Let us consider the following sample problem:
\begin{example} Consider the \SAT\ formula
$(x_1 \vee x_3) \wedge (x_2 \vee x_3)$. The equivalent box problem, using the ordering $(x_1, x_2, x_3)$, would begin with database $D$ containing the boxes $\{\langle F,\lambda , F\rangle, \langle\lambda, F, F\rangle\}$. $x_1$ has degree 1, $x_2$ has degree 1, and $x_3$ has degree 2.
\end{example}
     Now, let us consider how the algorithm would attack this problem if we used this naive, low-degree first ordering, which would be $(x_1, x_2, x_3)$. We immediately note two things. First, there are no boxes that have $\lambda$ for the final variable. This means that the algorithm will never find a box that allows it to skip multiple probe points unless it can use resolution to create a new box that happens to have that property. In fact, that will not occur. Additionally, we can consider all 8 possible probe points and track how many comparisons the algorithm would need to make on each point, assuming that each cluster only covers a single variable instead of four. We can see that the algorithm will always have to calculate the set intersection for the cluster of depth 1, will have to perform the set intersection for the cluster with depth 2 on 50\% of probe points, and will have to perform the set intersection for the cluster with depth 3 on 75\% of probe points. \par
    Let us contrast this with the high-degree ordering, which moves $x_3$ to the front and $x_1$ to the back to give us the ordering $(x_3, x_2, x_1)$. Now, our database contains the boxes $\langle F, \lambda, F \rangle$ and $\langle F, F, \lambda \rangle$.  This time, we do have a box with a $\lambda$ at the back; specifically, $\langle F, F, \lambda \rangle$. Therefore, after the probe point $\langle F, F, F\rangle$ finds this box, the algorithm will advance past $\langle F, F, T\rangle$ entirely. Additionally, while we still have to perform a set intersection for the cluster with depth 1 100\% of the time, and 50\% of the time for the cluster with depth 2, we only have to perform the set intersection at depth 3 25\% of the time --- and all of these numbers ignore how we skipped one of the probe points entirely. \par
    While this is of course a very simple example, this illustrates the principles that cause the strategy to be effective in larger datasets.

   \paragraph{ b) Local Interconnectedness} As a direct result of the 121-ary trie-based system described in Section \ref{Data_Structure}, if a box has multiple non-$\lambda$ variables within the same 4-variable cluster, they can all be recovered with a single operation. Therefore, maximizing the interconnectedness of these 4-variable blocks provides an advantage.\par
     To illustrate, let us consider the following example:
\begin{example}
$(x_1 \vee x_3) \wedge (x_2 \vee x_4)$, with each cluster containing 2 variables rather than 4. 
\label{prob3}
\end{example}
    If we use the naive strategy of keeping the variables ordered as-is, the first cluster contains $x_1$ and $x_2$ while the second cluster contains $x_3$ and $x_4$. Therefore, both clusters have interconnectedness 0, and we find that all boxes transcend a cluster boundary.  In other words, it will always take at least two comparisons to find either of these boxes. However, if we had gone with the ordering $(x_1, x_3, x_2, x_4)$ instead, the box corresponding to $(x_1 \vee x_3)$ would be entirely contained within the first cluster, and the box corresponding to $(x_2 \vee x_4)$ would be entirely contained within the second cluster. Therefore, each box would be entirely contained within a cluster, and each cluster would have had an interconnectedness of 1, and each box could have been recovered with only a single comparison. This saves a large number of comparisons over the long term.

\subsubsection{Ordering Algorithms}
\label{OrderingAlgos}
    Two major methods were employed in order to achieve these aims. The first was a descending degree sort. While this only directly achieved goal (a), in practice it did an acceptable job with goal (b). Additionally, we constructed three variations on this method. The first, naive degree descent, is where we simply order the variables according to their degree, using Algorithm \ref{ndd}. \par 

\begin{algorithm}[H]
\caption{Naive Degree Descent Ordering}
\begin{algorithmic}[1]
\STATE{Given a set of variables $V$ and, for each variable $v$, its degree $v_d$:}
\STATE{$O \gets $ SORT($V$ on $v_d$, descending)}
\STATE{Return $O$}
\end{algorithmic}
\label{ndd}

\end{algorithm}

   The second, optimally grouped degree descent, forms all possible groups of four variables, finds the greatest possible interconnectivity among these groups, and then selects from all groups with the greatest interconnectivity on the basis of the combined degree of the group of four, using Algorithm \ref{ogdd}. While this proved effective, it is a slow algorithm with a runtime of $\Theta(n^8)$.\par

\begin{algorithm}[H]
\caption{Optimally Grouped Degree Descent Ordering}
\begin{algorithmic}[1]
\STATE{Given a set of variables $V$ and, for each variable $v$, its degree $v_d$:}
\STATE{$O \gets \emptyset$}
\STATE{$G \gets $ all possible sets of four variables $\{v_i, v_j, v_k, v_l\}$ from $V$.} 
\WHILE{$G$ is nonempty}
\STATE{$maxIC \gets max_{g \in G}IC(g)$} \COMMENT{\texttt{Determine the maximum possible interconnectedness of all remaining groups.}}
\STATE{$X \gets max_{g \in G~s.t.~IC(g) = maxIC}(\sum_{v \in g}v_d)$} \COMMENT{\texttt{Of the groups with max interconnectedness, select the group for which the sum of the degrees of all variables is the greatest.}}
\STATE{$O \gets (O, X)$} \COMMENT{\texttt{Append this group to the ordering.}}
\FOR {$v \in X$}
\FOR {$Y \in G$}
\IF {$v \in Y$}
\STATE {$G \gets G \setminus Y$} \COMMENT{\texttt{Remove each grouping that contains one of the variables in the group that was selected.}}
\ENDIF
\ENDFOR
\ENDFOR
\ENDWHILE
\STATE{Return $O$}
\end{algorithmic}
\label{ogdd}
\end{algorithm}

    This necessitated the creation of the third subtype, the heuristically grouped degree descent ordering. This ordering works in groups of four. When creating a group, the first node chosen is the highest-degree remaining variable. Then, for each of the remaining three variables, the algorithm picks the variable with the highest interconnectedness to the nodes already chosen for this group of four, breaking inevitable ties based on degree. The result, Algorithm \ref{hgg}, is an algorithm that can compute its ordering significantly faster than the optimal ordering, while \Tetris\ run on this ordering runs is competitive with the optimal ordering.\par

\begin{algorithm}[H]
\caption{Heuristically Grouped Degree Descent Ordering}
\begin{algorithmic}[1]
\STATE{Given a set of variables $V$ and, for each variable $v$, its degree $d_v$:}
\STATE{$i \gets 0$}
\STATE{$X \gets \emptyset $}
\STATE{$O \gets \emptyset $}
\WHILE{$V$ is nonempty}
\IF {$i = 0$} 
\STATE {$x \gets y $ where $ Degree(y) = max_{v \in V}(d_v)$}
\ELSE
\STATE{$maxIC \gets max_{v \in V}(\sum_{x \in X} \theta(v, x))$} \COMMENT{\texttt{Calculate which variable has the best interconnectedness with the already chosen variables}}
\STATE{$x \gets y $ where $d_y = max_{v \in V s.t. V_{IC} = maxIC}(d_v)$} \COMMENT{\texttt{Break ties based on degree}}
\ENDIF

\STATE{$O \gets (O, x)$}
\STATE{$X \gets X \cup x$}
\STATE{$i \gets i + 1$}
\IF{$i = 3$}
\STATE{$i = 0$}
\STATE{$x \gets \emptyset$}
\ENDIF
\ENDWHILE
\STATE{Return $O$}
\end{algorithmic}
\label{hgg}
\end{algorithm}

Additionally, we employ the Treewidth tree decomposition, which was introduced in \cite{treewidth}. In essence, the idea here is to minimize the width of the search tree; in our domain, this corresponds to increasing the locality and local interconnectedness of variables. This naturally did a very good job with interconnectivity, while a decent job with placing high-degree variables early.\par
    We also experimented with the \textsf{minfill} ordering, described in \cite{minfill}. This ordering sets the elimination order such that the node to be eliminated is the node whose removal makes the smallest impact on the overall graph. While this ordering has proved effective in similar applications, we found it to perform poorly with \Tetris.\par
    In Table \ref{orderingTable}, we can see how these various orderings performed in practice on representative graph-based and non-graph-based benchmarks.  For instance, while the Treewidth sort outperformed all others on the AIS8 dataset \cite{AIS}, on the WikiVotes dataset (created using a \SNAP\ \cite{SNAP} dataset; see Section \ref{dataset_generation}) the ordering caused \Tetris\ to timeout. Notably, we see that the Heuristically Grouped Degree Descent takes only slightly longer to process the input compared to the Naive Degree Descent, but significantly less time than the Optimally Grouped Degree Descent; however, the runtime does not suffer significantly when going from the optimal ordering to the heuristic one. \par

\begin{table}[h]
\begin{center}

\begin{tabular}{|c|c|c|c|}

\hline
\multicolumn{4}{|c|}{\textsc{Runtime with Different Orderings}}\\
\hline 
Dataset & Ordering & Load Time (Seconds) & Runtime (Seconds)\\
\hline
\multirow{5}{*}{AIS8} & Naive Degree Descent & .001 & 1.494\\
                                     & Heuristically Grouped Degree Descent & .009 & 2.607\\
                                     & \textbf{ Treewidth} & \textbf{.002} & \textbf{.45}\\
                                     & Minfill & .008 & 7.005\\
                                     & Optimally Grouped Degree Descent & 1.629 & 0.573\\
\hline
\multirow{5}{*}{WikiVotes} & Naive Degree Descent & .927 & 34.124\\
                                     & \textbf{ Heuristically Grouped Degree Descent }&  \textbf{1.024 }& \textbf{ 21.509}\\
                                     & Treewidth & 2.349 & Timeout\\
                                     & Minfill & 2.032 & 4059.426\\
                                     & Optimally Grouped Degree Descent & 2.06 & 23.142\\

\hline

\end{tabular}

\caption{The performance of various ordering schemes on two datasets. As one can see, no ordering does best on both datasets; indeed, the best ordering on one is the worst on the other. Insertion Ratio (see Section \ref{Selective_Insertion}) was set to .5 for these tests.}
\label{orderingTable}
\end{center}
\end{table}

\subsection{Selective Insertion}
\label{Selective_Insertion}
\label{SEC:INSERT}
While the original \Tetris\ paper calls for the insertion of every box created through the resolution process to be inserted into the database, this proved to be inefficient in practice. Very frequently, this will result in a huge increase in the number of branches that the algorithm must scan while trying to find the output point without notably improving the quality of the containing boxes found. Therefore, we only insert those boxes that contain a suitably high percentage of $\lambda$s. The best results generally come from requiring slightly less than 50 percent of the layers to be composed of entirely $\lambda$s.\par
    In Table \ref{InsertTable}, we have posted the runtime for the AIS10 dataset \cite{AIS} showing the relationship between the number of $\lambda$s we require in storing a box and the runtime of \Tetris. Performance suffers at extreme settings, with optimal performance resulting from an insertion ratio of close to $\frac{1}{2}$. Hence, with regard to Theoretical Implication \ref{implications:time-space}, we find that by decreasing space complexity, we furthermore improve runtime. \par

\begin{table}[h]

\begin{center}
\begin{tabular}{|c|c|}

\hline
\multicolumn{2}{|c|}{\textsc{Insertion Ratio vs. Runtime}}\\
\hline 
Ratio & Time (Seconds)\\
\hline
.00 & 140.999\\
.05 & 123.49\\
.10 & 79.003\\
.15 & 71.229\\
.20 & 41.956\\
.25 & 36.685\\
.30 & 32.064\\
.35 & 24.405\\
.40 & 22.749\\
\textbf{.45} & \textbf{21.784}\\
.50 & 24.171\\
.55 & 25.205\\
.60 & 28.157\\
.65 & 35.738\\
.70 & 44.657\\
.75 & 51.809\\
.80 & 54.622\\
.85 & 66.529\\
.90 & 64.16\\
.95 & 75.829\\
1.00 & 88.001\\
\hline
\end{tabular}
\caption{A comparison of insertion ratios with the time to solve the AIS10 dataset \cite{AIS}. For these tests, we used the Treewidth ordering; similar behavior was observed from all ordering strategies.}
\label{InsertTable}
\end{center}

\end{table}

\section{Our Experimental Results}
\label{sec:res}
\label{SEC:RESULTS}
Here, we compare \CNFTetris\ --- that is, \Tetris\ designed to solve \CNF\ problems --- with other model counters in order to compare and contrast their ability to tackle model counting problems. A model counting problem, simply put, is, given a \CNF\ formula, output the number of satisfying solutions to that formula. Since \Tetris\ was originally designed to handle database joins, these are more natural problems for the algorithm to solve than the corresponding \SAT\ problems, which are to simply determine whether or not any solution exists. \par
While the model counting problem allows for a solver to simply find the number of solutions without finding the solutions themselves, \CNFTetris\ does in fact output all of the solutions. While this admittedly poses a disadvantage compared to the solvers we are comparing against, for some datasets, \CNFTetris\ runs faster in spite of this. \par
We compare our results with those of the \SharpSATs\ \cite{sharpsat} , \dSharp \cite{muise2012dsharp}, and \Cachet \cite{sang2004combining} model counters due to their recognition as state-of-the-art model counters. All tests were performed using a single thread on an 8-core E5 v3 2.6GHz processor with 64GB of RAM. \par
   Additionally, we include two types of datasets. The first is derived from join problems on graphs. These are the sort of problems that \Tetris\ was originally designed to solve; as such, \CNFTetris\ does a very good job on them. The second set is a selection of standard model counting benchmarks from various competitions held over the past several years. Most model counters have been trained to solve such problems, so they serve as an apt second set of benchmarks for \CNFTetris\ to compete against.

\subsection{Graph Results}
Here, we will compare and contrast how various solvers performed on model counting problems created from graphs.
\begin{table*}[ht!]
\begin{center}
\begin{adjustwidth}{-1.25cm}{}

\begin{tabular}{|c|c|c|c|c|c|c|c|c|c|}
\hline
\multicolumn{10}{|c|}{\textsc{Runtime on Various Datasets}}\\
\hline
Query & Base Graph & \multicolumn{2}{|c|}{\CNFTetris} & \multicolumn{2}{|c|}{\SharpSATs} & \multicolumn{2}{|c|}{\dSharp} & \multicolumn{2}{|c|}{\Cachet}\\
\hline
& & Loadtime & Runtime & Runtime & Speedup & Runtime & Speedup & Runtime & Speedup \\
\hline
\multirow{3}{*}{3-clique} & Wikivotes & 1.080 & \textbf{21.55} & Timeout & n/a & Timeout & n/a & Timeout & n/a \\
& Facebook & .593 & \textbf{11.22} & Timeout & n/a & Timeout & n/a & Timeout & n/a \\
& Soc-Epinions & 77.420 & \textbf{309.418} & Timeout & n/a & Timeout & n/a & Timeout & n/a \\
\hline
\multirow{3}{*}{2-path} & Wikivotes & 2.32 & \textbf{35.171} & 31801.8 & 904.2 & 36947.5 & 1050.51 & 28838.1 & 819.94 \\
& Facebook & .752 & \textbf{10.236} & 4637.25 & 453.03 & 3871.86 & 378.26 & 2210.42 & 215.95 \\
& Soc-Epinions & 6.22 & \textbf{236.439} & Timeout & n/a & Timeout & n/a & Timeout & n/a \\
\hline
\hline
\multicolumn{2}{|c|}{AIS6} & 0.0021 & .013 & \textbf{.004} & .308 & .0074 & .569 & .0135 & 1.034 \\
\hline
\multicolumn{2}{|c|}{AIS8} & 0.0027 & .45 & \textbf{.0466} & 0.104 & .3386 & 0.752 & .300 & .667 \\
\hline
\multicolumn{2}{|c|}{AIS10} & .00798 & 21.55 & \textbf{2.629} & .122 & 14.757 & .685 & 16.4724 & .764 \\
\hline
\multicolumn{2}{|c|}{AIS12} & .0198 & 1732.03 & \textbf{124.937} & .0721 & 1002.73 & .579 & 1020.64 & .589 \\
\hline
\multicolumn{2}{|c|}{ls8-simplified4} & .00063 & .123 & \textbf{.004933} & .0401 & .021 & .171 & .01 & .0813 \\
\hline
\multicolumn{2}{|c|}{LS5 firstr} & .000656 & .409 & \textbf{.0156} & .0381 & .095 & .232 & .025 & .0611 \\
\hline
\end{tabular}
\end{adjustwidth}

\caption{This table shows the comparative results of various solvers on \CNF\ datasets created using various \SNAP\ graphical datasets and \SAT\ datasets. All runtimes are in seconds; timeout was set at 40,000 seconds. For these tests, we used a insertion ratio of .45 and the Heuristic Degree Descent ordering for \CNFTetris. Wikivote \cite{SNAP} contains 39 variables and 745485 clauses; Facebook \cite{SNAP} contains 36 variables and 464234 clauses; and Soc-Epinions \cite{SNAP} contains 51 variables and 4578589 clauses (all clause data is for 3-clique; 2-path has approximately 2/3rd that number of clauses). For the SAT datasets, AIS6 \cite{AIS} has 61 variables and 581 clauses; AIS8 \cite{AIS} has 113 variables and 1520 clauses; AIS10 \cite{AIS} has 181 variables and 3151 clauses; AIS12 \cite{AIS} has 265 variables and 5666 clauses; ls8-simplified4 \cite{MC} has 119 variables and 410 clauses; and LS5-firstr \cite{MC} has 125 variables and 529 clauses. In \CNFTetris, loadtime refers to the time to determine the variable ordering and insert the boxes into the database, while runtime is the time to find all satisfying solutions.}
\label{table:main}

\end{center}
\end{table*}

\subsubsection{Dataset Generation}
\label{dataset_generation}
The \CNF\ graph datasets were created using the publicly available \SNAP\ datasets \cite{SNAP}. These are graph datasets; that is, each consists of a set of vertices and a set of edges connecting those vertices. Each of these datasets is a natural problem; some arose from social networks, while others are anonymized data from other corners on the Internet. We can then use this data to run various queries; for instance, we can determine how many triangles exist in the graph. Our goal, then, is to convert these problems into an equivalent \CNF\ problem so that we can use \CNFTetris\ and the other model counters to solve them.\par
To do this, each vertex is first assigned a unique binary encoding using $\log(n)$ bits. We furthermore increase the number of bits such that there are $\log(n)$ bits times the size of the data structure being looked for in the graph. For instance, if we are performing the triangle query on the dataset, there will be $3 \cdot \log(n)$ bits used in the encoding. Henceforth, let $k$ represent the size of this query. Each of these bits will correspond to a variable in the \CNF\ encoding of the problem. In essence, each of these repetitions represents a vertex in e.g. the triangle.\par
    Next, we will encode each absent edge (i.e., a pair of vertices $v_1$ and $v_2$ such that the edge $(v_1, v_2) \notin E$, where $E$ is the edge set of the graph) as $\binom{k}{2}$ total Boolean formulas for the $k$-clique query, and $ k$ formulas for the $k$-path query. Each of these formulas corresponds to e.g. one of the three edges of a triangle. Observe that any possible satisfying solution to the \SAT\ problem cannot select an edge that does not exist; therefore, any assignment that matches one of these formulas on all variables must be rejected. Equivalently, any accepting assignment must match at least one variable in the inverse.  This naturally leads to a \CNF\ definition, which is what we create.  We will repeat this encoding over each possible set of vertex pairings such that the lower-indexed vertex is always written before the higher-indexed vertex, while adding additional clauses to reject all edges that would be from the higher-indexed vertex to the lower-indexed vertex. Simplifying resolutions are also performed where possible. Therefore, we have created a \CNF\ problem where, for each output to the query in the original problem, there exists a solution. We can and do use this problem instance as input for both \Tetris\ and other model counters. 
   Let us examine an example instance of this. Consider the very simple example graph depicted in Figure \ref{dataset_gen_example} below, using the triangle query. In order to encode the non-edge $(v_2, v_4)$, we must first calculate the binary encodings of each of these vertices. These are (01) and (11), respectively. We then flip all of the bits, giving us (10) and (00). Then we construct three \CNF\ clauses, each of which corresponds to being the first, the second, or the third edge of the triangle. The first will be $(x_1 \vee \bar{x_2} \vee \bar{x_3} \vee \bar{x_4})$, with $(x_1 \vee \bar{x_2})$ corresponding to (10) and $(\bar{x_3} \vee \bar{x_4})$ corresponding to 00. Similarly, the second will be $(x_1 \vee \bar{x_2} \vee \bar{x_5} \vee \bar{x_6})$; and the third will be $(x_3 \vee \bar{x_4} \vee \bar{x_5} \vee \bar{x_6})$. Additionally, we insert clauses forbidding ``bad" orderings of the points; in other words, we are making sure we do not count $(v_1, v_2, v_3)$ and $(v_2, v_3, v_1)$ as separate triangles. \par
    Then, when \Tetris\ run on this \CNF\ input attempts to recover the number of triangles, when it uses the probe point (assuming naive ordering) $\langle T, T, T, F, F, T \rangle$ --- that is, the probe point that corresponds to the inverted binary representations of $v_1, v_2,$ and $v_3$, the three vertices in the top-right triangle --- let us consider what happens. Since each of the selected edges does not correspond to the missing edge, we know that all three of those clauses must be satisfied; and since each edge is in the index order, we know that the additional clauses that we added will also accept our input. Therefore, \Tetris\ will add this probe point to the output list. This continues for all other probe points until \Tetris\ has found all triangles.

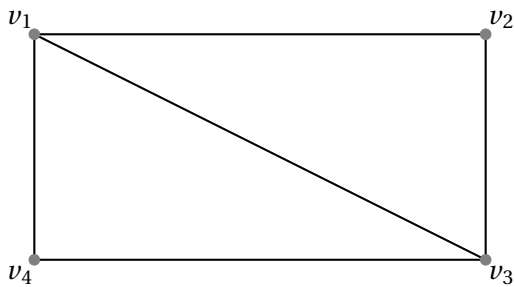
\begin{figure}[H]
\begin{center}
\begin{tikzpicture}

\draw[black, thick] (-3,0) -- (3, 0) -- (3, 3) -- (-3, 3) -- cycle;
\draw[black, thick] (-3,3) -- (3, 0);
\filldraw[gray] (3,0) circle (2pt);
\filldraw[gray] (-3,0) circle (2pt);
\filldraw[gray] (3,3) circle (2pt);
\filldraw[gray] (-3,3) circle (2pt);

\node[] at (-3.2, 3.2) {$v_1$};
\node[] at (3.2, 3.2) {$v_2$};
\node[] at (3.2, -.2) {$v_3$};
\node[] at (-3.2, -.2) {$v_4$};

\end{tikzpicture}
\caption{A sample graph on four vertices, used in the above example. We will encode the non-edge $(v_2, v_4)$ as our \SAT\ formula, along with additional clauses that ensure we only count triangles once, which will allow us to run the query as a \SharpSAT\ problem.}
\label{dataset_gen_example}
\end{center}
\end{figure}

\subsubsection{Results Analysis}
As can be seen in Table~\ref{table:main}, while all of the other solvers find these problems to be difficult, \CNFTetris\ solves them quickly. Queries that take seconds on \CNFTetris\ wind up taking hours on the competition, with \CNFTetris\ running nearly a thousand times faster on some problems. This is largely due to the extremely high number of clauses relative to the number of variables, along with the fact that these clauses contain a large number of variables; these factors are not present in many of the standard \SAT\ benchmarks. For instance, while the average clause in many \SAT\ benchmarks contains two or three variables, here the average clause has thirty or more. And while \SAT\ benchmarks rarely have over ten times as many clauses as variables, here the system is forced to tackle an environment where the number of clauses is exponentially larger than the number of variables. Note that all of the solvers we are comparing against use unit propagation techniques in order to count models \cite{SATsurvey}; see Section \ref{sec:rw} for details. Because of this, the increased number of clauses directly corresponds to increased work for these solvers. 

\subsection{Nongraph Results}
In this section, we will discuss how \CNFTetris\ performed as compared to other solvers on standard model counting benchmarks.
\subsubsection{About the Datasets}
These datasets are a combination of datasets from the SATLIB datasets \cite{AIS} and the SampleCount benchmarks for model counting taken from International Joint Conference on Artificial Intelligence '07 \cite{MC}. We chose to use the AIS datasets for several reasons. First, each of the datasets terminates in a reasonable amount of time on all solvers, allowing us to find interesting comparisons. Secondly, due to the existence of increasingly-sized versions of this dataset, we can use this as insight into whether or not \Tetris\ is scaling efficiently with the size of the dataset. Additionally, we featured the ls8-simplified4 and LS5firstr datasets, which gave us insights into our implementation's strengths and weaknesses.

\subsubsection{Results Analysis}
As Table \ref{table:main} shows, \Tetris\ is competitive with \dSharp\ and \Cachet\ on many of the datasets. Indeed, a factor of 2 separates us from either solver on all of the AIS datasets, a difference that engineering work alone can easily overcome. While there is significantly more space between it and \SharpSATs, a factor of 10 on average, we believe the distance is not insurmountable. \par
    The largest gaps exist on the ls8-simplified and LS5-firstr datasets. On these, \CNFTetris\ is roughly a factor of 5 off of the worst of the competition, and a factor of over 20 as compared to \SharpSATs. The reason for this is simple: both of these datasets contain pure variables. In other words, there exist variables $x$ such that, in all clauses, $\bar{x}$ never appears, or vice-versa. This is an important piece of information, one that can and must be utilized, but \CNFTetris\ in its current state does not know how to do so. However, since we know what the problem is, we expect to be able to quickly and efficiently attack this issue.

\section{Related Work}
\label{sec:rw}
 Our work builds on \Tetris\ as developed by Abo Khamis et al. in \cite{AboKhamis:2015}. In that work, the authors introduced \Tetris\ as a beyond-worst-case algorithm for geometrically solving the database join problem.  This in turn built on work on the Minesweeper \cite{Minesweeper}, NPRR \cite{NPRR} and Leapfrog \cite{Leapfrog} algorithms, of which \Tetris\ is a generalization. Furthermore, \Tetris\ itself is can be considered a version of the \DPLL\ algorithm \cite{DPLL} with clause learning. In \DPLL, which is itself an evolution of the earlier DP\ \cite{DP} algorithm, a variable is chosen at every stage and assigned to be either true or false. The algorithm then uses unit propagation in order to simplify clauses under these assumptions. In this techniques, after the solver assigns a value to a variable, every other clause is inspected to see if this assignment creates a unit clause (i.e. a clause with only one variable in it), and to see if resolutions can be performed. This process continues until a conflicting clause (that is, a clause that is violated by the assignments) is found, at which point the algorithm is forced to backtrack. In the clause learning versions, introduced in \cite{silva1997grasp}, the solver takes this as an opportunity. It determines where it went astray, adds a new clause to its cache that is the negation of this errant assignment, non-chronologically backtracks to where this decision-making took place, and then proceeds in the opposite direction. \par
   The reasoning why \CNFTetris\ is a form of this algorithm follows from the aforementioned method of converting from \SAT\ clauses to boxes, and vice-versa (see Algorithm \ref{algo:convert}). Since these two representations are exactly equivalent, any operation performed on one representation can be translated into an operation on the other. Hence, every single operation \Tetris\ performs on the boxes over its execution must correspond exactly to a set of operations on the original clauses. \par
    For instance, the \Contains\ operation matches up with the idea of a conflicting clause. When a containing box is found, we can consider this as finding a box that rejects the current probe point as a potential output point. Meanwhile, a conflicting clause rejects a potential satisfying assignment in much the same way. Furthermore, over the course of \Tetris, the algorithm tentatively assigns a variable to either true or false, and then proceeds along with this assumption until a contradiction is found, all while learning additional clauses where possible through the resolution process. When a containing box is found or synthesized through resolution, and we advance the probe point accordingly, we are in essence backtracking to the earliest decision point and choosing to go in the opposite direction, just as \DPLL\ with clause learning does. Therefore, this is exactly the \DPLL\  algorithm with clause learning, with the added restriction of a fixed global variable ordering \cite{AboKhamis:2015}. \par
     \Tetris\ additionally utilizes a three-value logic system. While similar systems have been utilized in database schemes, such as by Zaniolo in \cite{Zaniolo:1982}, in these systems the three values are \true, \false, and \unknown. Here, however, the three values we are considering can be summarized as \true, \false, and \sfboth. This causes a number of key differences. For instance, $\true \wedge \unknown$ is equivalent to \unknown, while $\true \wedge \sfboth$ is equivalent to \true. Similarly, $\true \vee \unknown$ is equivalent to \true, while $\true \vee \sfboth$ is equivalent to \sfboth. \par
     Much work has been done in creating \SAT\ solvers. Let us briefly discuss those state-of-the-art solvers we are comparing our work against. First, let us consider \Cachet \cite{sang2004combining}. This solver was originally released in 2005, with minor compatibility updates continuing through the most recent version, which came out in 2015 \cite{cachet_page}. \par
Next, there is \SharpSATs. First released in 2006, \SharpSATs\ significantly eclipsed contemporary solvers \cite{sharpsat}. \SharpSATs\ has been maintained over time, with the most recent release in 2013 \cite{sharpsat_page}. \par
    Finally, we come to \dSharp. The most recently released of our three competitors, \dSharp was introduced in 2012 in order to efficiently compile CNF problems into the Decomposable Negation Normal Form language \cite{muise2012dsharp}. Further work allowed it to function as a model counter, which is how we utilize it. The version we use was released in 2016. \par
    What all of these solvers have in common, including \CNFTetris, is that they have at their core a form of the \DPLL\ algorithm with clause learning; indeed, almost all modern \SAT\ and \SharpSAT\ solvers do so \cite{SATsurvey}. The differences, then, come in terms of efficiency. Each solver uses a different array of techniques in order to effectively cache and recover learned clauses, to determine the variable ordering, and to identify clause conflicts. \par
    With \Cachet, the authors focused on adding component caching capabilities on top of an existing \SAT\ solver, ZChaff \cite{sang2004combining}, the theoretical grounds for which were themselves introduced in \cite{zchaff}. This caching involved the storing of subproblems in a local cache, so that these clauses would not have to be re-derived by \Cachet\ at a later juncture, thereby reducing redundant calculations over the course of the algorithm. This can be viewed as analogous to how \CNFTetris\ stores learned boxes in a local cache, which it checks for containing boxes before examining the original database. A subproblem, meanwhile, could be thought of as a box with a high percentage of variables set to $\lambda$. \par
   However, one key difference here is the nature of the cached components. In \Cachet, due to how the algorithm functions, it must regularly prune the cache of siblings that would otherwise cause it to undercount the number of models \cite{sang2004combining}. \CNFTetris, in contrast, needs to perform no such pruning; it will naturally determine the exact number of models without any additional work. \par
   \SharpSATs\ built on the work in \Cachet\ while adding new ideas of its own \cite{sharpsat}. Boolean constraint propagation (also known as the failed literal rule \cite{GSS09}) and unit propagation heuristics are used by \SharpSATs\ to identify failed literals with greater efficiency than was done in \Cachet\ \cite{sharpsat}. However, by fixing the variable order, \CNFTetris\  simplifies this process. Ultimately, this means that it finds its conflicting boxes in a fundamentally different manner than \SharpSATs\ does, which provides room for \CNFTetris\ to outperform \SharpSATs. \par
   \dSharp, much like how \SharpSATs\ built on \Cachet\, uses \SharpSATs\ as a core component \cite{muise2012dsharp}. The authors perform a DNNF translation, and then use properties of decomposability and determinism to perform model counting \cite{GSS09}. Though these differences do allow it to outperform more pure DPLL-based solvers on some benchmarks \cite{GSS09}, since this system still uses \SharpSATs\ as a core component, it still shares many of the same advantages and disadvantages in comparison to \CNFTetris. \par
   As we have seen, all of the competing solvers can be viewed as evolutions along a single line. While \CNFTetris\ does not throw the baby out with the bathwater --- that is, while \CNFTetris\ still continues to implement the classic \DPLL\ algorithm --- it does represent a distinct deviation from that line, challenging assumptions such as the necessity of allowing a non-fixed global variable ordering and the much more complex data storage scheme necessary in order to accommodate this. While this has necessitated much work in order to implement, it has also shown vast promise.

\section{Acknowledgments}
We would like to thank Mahmoud Abo Khamis, Hung Q. Ngo, Christopher R\'{e}, and Ce Zhang for very helpful discussions. 

\bibliographystyle{plain}
\bibliography{ref}

\end{document}